\def\R{ \mathbb{R} }
\def\>{\rangle}
\def\<{\langle}
\newcommand{\bra}[1]{\langle {#1} |}
\newcommand{\ket}[1]{| {#1} \rangle}
\newcommand{\ketbra}[2]{\ensuremath{\left|#1\right\rangle\!\!\left\langle#2\right|}}
\newcommand{\tr}[2]{\mathrm{Tr}_{#1}\left( #2 \right)}
\newcommand{\iden}{\mathbb{1}}
\newcommand{\be}{\begin{equation}}
\newcommand{\ee}{\end{equation}}
\definecolor{nicepurple}{rgb}{0.3,0,0.6}
\def\Re{\operatorname{Re}}
\theoremstyle{plain}
\newtheorem{thm}{Theorem}
\newtheorem{lem}[thm]{Lemma}
\newtheorem{corol}[thm]{Corollary}
\theoremstyle{definition}
\newtheorem{defn}{Definition}
\theoremstyle{remark}
\begin{document}

\title{Quantum fluctuation theorems, contextuality and work quasi-probabilities}
\begin{abstract}

We discuss the role of contextuality within quantum fluctuation theorems, in the light of a recent no-go result by Perarnau \emph{et al}. We show that any fluctuation theorem reproducing the two-point-measurement scheme for classical states either admits a notion of work quasi-probability or fails to describe protocols exhibiting contextuality. Conversely, we describe a protocol that smoothly interpolates between the two-point measurement work distribution for projective measurements and Allahverdyan's work quasi-probability for weak measurements, and show that the negativity of the latter is a direct signature of contextuality.
\end{abstract}
\author{Matteo Lostaglio}
\affiliation{ICFO-Institut de Ciencies Fotoniques, The Barcelona Institute of Science and Technology, Castelldefels (Barcelona), 08860, Spain}


\maketitle

While \emph{quantum} thermodynamics thrived in recent years \cite{goold2016role, vinjanampathy2016quantum}, we still lack clear evidence that there are any thermodynamically relevant protocols whose results cannot, in a precise sense, be ``simulated'' classically. In this work we show that such protocols do indeed exist; the solution lies in a long-standing debate that surrounded the definition of work in so-called fluctuation theorems (FTs).

FTs are one of the most important set of results in non-equilibrium thermodynamics \cite{jarzynski1997nonequilibrium,crooks1999entropy, sevick2008fluctuation, campisi2011colloquium}. In their simplest form, a classical system initially prepared in a thermal state at temperature $T$ is driven out of equilibrium. This is achieved by changing the parameters of the Hamiltonian from $H(0)$ to $H(\tau)$, according to a fixed protocol, while keeping the system isolated from the environment. Each repetition requires an amount of work $w$, corresponding to the realisations of a random variable $W$. Denote by $Z_{H(t)}$ the partition function of $H(t)$ and by $k$ Boltzmann's constant. The free energy difference between the equilibrium state with respect to the final Hamiltonian and the initial equilibrium state reads $\Delta F = -kT \log Z_{H(\tau)}/Z_{H(0)}$. Then the Jarzynski equality characterises work fluctuations in the protocol above~\cite{jarzynski1997nonequilibrium}:
\begin{equation}
\label{eq:jar}
\langle e^{-\beta W} \rangle = e^{-\beta \Delta F},
\end{equation}
where $\langle \cdot \rangle$ denotes averaging and $\beta =(kT)^{-1}$. Through Jensen's inequality, Eq.~\eqref{eq:jar} can be seen as a generalisation of the standard thermodynamic inequality \mbox{$\langle W \rangle \geq \Delta F$}. However, it also encodes information about fluctuations, e.g. the probability that \mbox{$W> \Delta F + x$} is bounded by $e^{-\beta x}$. Eq.~\eqref{eq:jar} and related equalities provide refined statements of the second law of thermodynamics beyond averages, are valid at the microscopic scale far from the thermodynamic limit, recover results from linear response theory and give a way to measure the free energy from non-equilibrium measurements of work \cite{jarzynski2011equalities}.   

Much effort has been devoted to finding analogous results characterising work fluctuations of quantum processes and in the presence of quantum coherence  \cite{allahverdyan2005fluctuations, campisi2011colloquium, albash2013fluctuation,rastegin2014jarzynski, allahverdyan2014nonequilibrium, solinas2015full,manzano2015nonequilibrium, aberg2016fully, alhambra2016fluctuating, perarnau2016no, miller2016time}.  One of the main challenges is the definition of work for quantum systems (see \cite{jarzynski2015quantum} and references therein). Work in closed systems is defined, in classical physics, as the energy difference between the initial and final phase space point, while in the quantum setting the conventional approach adopts the \emph{two-point-measurement} (TPM) scheme \cite{talkner2007fluctuation}. The idea is to 
define work as the energy difference between the outcomes of two projective measurements of energy, performed at the beginning and at the end of the protocol. 

Whilst the TPM definition of work has become standard \cite{campisi2011colloquium}, various authors claimed the approach has some important limitations \cite{allahverdyan2005fluctuations}. Firstly, some observed that, projective measurements being invasive, the average work according to this definition does \emph{not} in general coincide with the average energy change if no measurement is performed \cite{allahverdyan2014nonequilibrium, solinas2015full, miller2016time}. Secondly, others pointed to the fact that quantum coherence and entanglement are destroyed by the TPM scheme at the start of the protocol \cite{solinas2015full, solinas2016probing, lostaglio2015description, aberg2016fully}.

Concerning the first of the two issues raised, a recent no-go result has shown that it is a universal feature, proving that within quantum theory no FT can simultaneously reproduce the TPM scheme for classical states and respect the identification of average work with average energy change~\cite{perarnau2016no}. However, it is unclear if this identification is indeed a property we should impose on quantum FTs. Regarding the second issue, one can argue that the evolution described by $H(t)$ generates quantum coherence again and, in fact, interference effects do appear in the TPM work distribution when compared to the classical limit \cite{jarzynski2015quantum}. Hence, what are the fundamental limitations, if any, of current FTs?

There is a caveat in the no-go theorem of Ref.~\cite{perarnau2016no}, in that it can be circumvented at the price of extending the work distributions to quasi-probabilities \cite{allahverdyan2014nonequilibrium, solinas2015full,solinas2016probing,  miller2016time}. The occurrence of negative values in such distributions has been considered a limitation by some authors \cite{allahverdyan2014nonequilibrium, talkner2016aspects}, while others claimed they signal quantum effects \cite{solinas2015full, miller2016time}, since they can be related to the violation of the Leggett-Garg inequality \cite{leggett1985quantum, clerk2011full}. A second natural question is then: to what extent quasi-probabilities are a \emph{necessary} ingredient to capture quantum effects in FTs and what exactly can we infer from observing their negativity? 

Here we contribute to these issues by showing that
\begin{enumerate}
	\item Any FT that reproduces the two-point measurement scheme for classical states is either based on work quasi-probabilities or admits a non-contextual description. In other words, restricting to work probabilities prevents us from probing stronger forms of non-classicality.
	\item Conversely, in a generalisation of the TPM scheme we show that the appearance of negativity in a work quasi-probability signals the onset of contextuality.
\end{enumerate} 

\subsection{Setting the scene: A no-go result for fluctuation theorems}

A system prepared in a state $\rho$ undergoes a unitary evolution $U$ between time $0$ and $\tau$, induced by a time-dependent Hamiltonian $H(t)$. By comparison to the classical case, one wishes to define a distribution $p(w|\mathcal{P})$ giving the probability that the protocol $\mathcal{P}$ requires an amount of work $w$ to be realised. Classically, $w$ can be defined as the internal energy change of the system. Quantum mechanically, however, the mere act of probing the initial state $\rho$ will induce a disturbance $\rho \mapsto \sigma$. We will then consider a general protocol $\mathcal{P}$ schematically described as
\begin{equation}
\label{eq:protocols}
	 \mathcal{P}:= \{ (H(0), \rho) \longmapsto (H(\tau),U \sigma U^\dag)\},
\end{equation} where $U = \mathcal{T} \exp \left(-i \int_{0}^\tau dt H(t) \right)$ and $\mathcal{T}$ is the time-ordering operator. To fix the notation,
\begin{equation*}
H(0)=\sum_i E_i \ketbra{i}{i}:=\sum_i E_i \mathcal{E}_i , \quad H(\tau)=\sum_i E'_i \ketbra{i'}{i'}.
\end{equation*}

The standard framework extracts the work statistics from the TPM scheme, measuring $H(0)$ at the beginning of the protocol and $H(\tau)$ at the end \cite{talkner2007fluctuation}. Upon observing, respectively, outcomes $i$ and $j$, one sets \mbox{$w = E'_j - E_i$}. The random variable $W$ defined in this way satisfies Eq.~\eqref{eq:jar} \cite{campisi2011colloquium}. 

Note, however, that the first measurement can modify the subsequent statistics. This becomes evident looking at the problem in Heisenberg picture, where one can think of the TPM scheme as the sequential measurement of $H(0)$ followed by $U^\dag H(\tau) U$. Whenever $[H(0),U^\dag H(\tau) U] \neq 0$ and $[\rho,H(0)]\neq 0$, the statistics of the second measurement will be disturbed by the first. In fact, $\sigma = \mathcal{D}_{H(0)}(\rho)$, with $\mathcal{D}_{H(0)}$ denoting the operation that fully dephases in the eigenbasis of $H(0)$.

A recent result formalised this clash into a no-go theorem \cite{perarnau2016no}. Consider the definition:
\begin{defn}
	A protocol $\mathcal{P}$ is called a \emph{FT protocol} if for initial states with no quantum coherence, i.e., $\rho$ satisfying $[\rho, H(0)] = 0$, the results of the TPM scheme are recovered:
	\begin{equation}
	\nonumber
	p(w|\mathcal{P}) = p_{\rm tpm}(w|\mathcal{P}):=\sum_{E'_j -E_i =w} p_i p_{j|i},
	\end{equation}
	where $p_i = \bra{i}\rho\ket{i}$, $p_{j|i} = |\bra{j'}U\ket{i}|^2$.
\end{defn}
In other words, FT protocols are those that recover the TPM scheme at least in those situations in which the measurement does not introduce any disturbance to the evolution of the system. This is sufficient to reproduce Eq.~\eqref{eq:jar} for thermal initial states and to match the expected definition of work in the classical limit (in the cases analysed in \cite{jarzynski2015quantum}). Consider now the following assumptions:
\begin{enumerate}
	\item (Work distribution) \label{ass0} $\mathcal{P}$ measures a work probability  distribution $p(w|\mathcal{P})$ convex under mixtures of protocols. More precisely, let  $q \in [0,1]$ and $\mathcal{P}^i$ be protocols only differing by the initial preparation $\rho_i$. One requires that if \mbox{$\rho_0 = q \rho_1 + (1-q) \rho_2$} then
	\begin{equation}
	\nonumber
	p(w|\mathcal{P}^0) = q	p(w|\mathcal{P}^1) + (1-q) 	p(w|\mathcal{P}^2).
	\end{equation}
			\item (Average work) \label{ass2} The average \emph{measured} work should reproduce the average energy change induced by the unitary process on the \emph{initial state} 
		\begin{equation}
		\nonumber
		\langle W \rangle := \sum_w p(w|\mathcal{P}) w = \tr{}{U \rho U^\dag H(\tau)} - \tr{}{\rho H(0)}.
		\end{equation}
	\end{enumerate}
	Assumption~\ref{ass0} includes the natural demand that, upon conditioning the choice of the protocol on a coin toss, the measured fluctuations are simply the convex combination of those observed in the individual protocols. Assumption~\ref{ass2} is based on the identification of average work with average energy change in closed systems. The main result of Ref.~\cite{perarnau2016no} is that no FT protocol can satisfy both \ref{ass0} and \ref{ass2} when the system has no matching gaps, i.e., $E'_{j_1} - E_{i_1} \neq E'_{j_2} - E_{i_2} $ if $(j_1,i_1)\neq (j_2,i_2)$.

\subsection{Genuinely non-classical effects}

While the result of Ref.~\cite{perarnau2016no} was phrased mainly as an incompatibility between the requirement that $\mathcal{P}$ is a FT protocol and assumption \ref{ass2}, we note that assumption~\ref{ass0} contains the often implicit condition that a work distribution \emph{exists}. Since work involves two generally non-commuting observables $H(0)$ and $U^\dag H(\tau) U$, restrictions arise from the lack of a joint probability distribution for them. 

The above point can be sharpened as follows. Broadly speaking, we want to make a distinction between a phenomenon that is essentially classical in nature from one that is irreducibly quantum-mechanical \cite{jennings2016no}. We may think in terms of a challenge involving two parties, Alice and Bob. Alice sets up a quantum experiment, specifying the quantum systems involved, their interactions and the measurements performed, and tells Bob about this. Bob then prepares a box, in which he devises some classical mechanism that tries to reproduce the same statistics of Alice's experiment. If -- despite Bob's best efforts -- Alice can always find discrepancies between the statistics produced by Bob and her quantum experiment, we call the quantum phenomenon involved \emph{genuinely non-classical}.

The idea of classical mechanism is encapsulated in the notion of hidden variable model or \emph{ontological model}.
At the operational level, consider a set of instructions defining preparations procedures $P$ and measurement procedures $M$ with outcomes $k$; Alice observes $k$ with probability $p(k|P,M)$. Bob's classical mechanism reproduces this statistics using a set of states $\lambda$ that are, in general, randomly sampled from a set $\Lambda$ according to a probability distribution $p(\lambda|P)$ every time the preparation $P$ is performed. For example, $\Lambda$ may be the phase space of the classical mechanism, but we are not limited to that. Moreover, Bob's mechanism can include a measurement device $M$ that takes in the physical state $\lambda$ and outputs an outcome $k$ with probability $p(k|\lambda,M)$. Bob wins the challenge if he can reproduce the statistics $p(k|P,M)$ as an average over the unobserved states of the classical mechanism~\cite{spekkens2005contextuality, kunjwal2015kochen}:
\begin{equation}
	\label{eq:ontologicalreproduce}
	p(k|P,M) = \int_\Lambda d\lambda p(\lambda|P) p(k|\lambda,M).
\end{equation}

For a mechanism to be classical it should operate in a non-contextual way. Specifically, the mechanism is called \emph{preparation non-contextual} if $p(\lambda|P)$ is a function of the quantum state alone, i.e. \mbox{$p(\lambda|P) \equiv p(\lambda|\rho)$}; for example, Bob's mechanism cannot distinguish different ensembles associated to the same $\rho$. Furthermore, the mechanism is called \emph{measurement non-contextual} if $p(k|\lambda,M)$ depends only on the POVM element $M_k$ associated to the corresponding outcome of the measurement $M$, i.e. \mbox{$p(k|\lambda,M) \equiv p(k|\lambda,M_k)$} \cite{spekkens2005contextuality, spekkens2014status}. If a mechanism is both preparation and measurement non-contextual, it is called \emph{universally non-contextual}. See Supplemental Material~A for the relation to Kochen-Specker contextuality \cite{kochen1975problem}.

It is a question of fundamental as well as practical importance to know if FT protocols can uncover genuinely non-classical phenomena. Here we make this precise by showing:
\begin{thm}
	\label{thm:nc}
	Assume the FT protocol $\mathcal{P}$ satisfies assumption~\ref{ass0}. Then there exists a universally non-contextual ontological model for every preparation $\rho$ and measurement of $p(w|\mathcal{P})$.
\end{thm}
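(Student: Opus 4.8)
The plan is to first show that assumption~\ref{ass0} forces the work statistics to arise from a genuine POVM, and then to write down the ``obvious'' ontological model attached to that POVM and verify it is universally non-contextual. The whole weight of the argument sits in the first step: it is precisely the \emph{positivity} built into assumption~\ref{ass0} (that $p(w|\mathcal{P})$ is a \emph{probability} distribution, affine in $\rho$) that will fail for the quasi-probabilities discussed later and that here buys us non-contextuality essentially for free.

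\emph{Step 1 (POVM representation).} With $H(0)$, $H(\tau)$ and $U$ held fixed, the convexity clause of assumption~\ref{ass0} says that $\rho \mapsto p(\cdot|\mathcal{P})$ is affine on the set of density operators; together with the fact that its output is a normalised, non-negative distribution, it extends uniquely to a linear map on Hermitian operators that preserves the trace (summed over $w$) and sends positive operators to vectors with non-negative entries. By the trace inner product there are Hermitian $M_w$ with $p(w|\mathcal{P}) = \tr{}{\rho M_w}$; non-negativity of $p(w|\mathcal{P})$ on \emph{all} states forces $M_w \ge 0$, and $\sum_w p(w|\mathcal{P}) = 1$ for every $\rho$ forces $\sum_w M_w = \iden$, so $\{M_w\}$ is a POVM. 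Only at this point does the FT-protocol hypothesis enter, and only to fix the $H(0)$-diagonal, $\matrixel{i}{M_w}{i} = \sum_{j:\,E'_j - E_i = w}|\bra{j'}U\ket{i}|^2$; this identifies $\{M_w\}$, restricted to $H(0)$-diagonal inputs, with the coarse-graining of the commuting TPM pair-POVM $\Pi_{ij} = |\bra{j'}U\ket{i}|^2\,\mathcal{E}_i$, which will allow the same model to also reproduce the initial energy measurement if one wishes.

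\emph{Step 2 (the model).} Let the ontic space $\Lambda$ be the set of admissible work values $w$ (refined to pairs $(i,j)$ if one also wants the $H(0)$ outcomes), set $p(\lambda = w|\rho) := \tr{}{\rho M_w}$, and take the deterministic response $p(w'|\lambda = w, M) := \delta_{w',\,w}$. Then Eq.~\eqref{eq:ontologicalreproduce} holds by construction. The distribution $p(\lambda|\rho)$ depends on $\rho$ alone and not on how $\rho$ was prepared, so the model is preparation non-contextual; the response function depends only on the outcome, equivalently only on the effect $M_w$, and by assumption~\ref{ass0} every implementation of the work measurement realises the \emph{same} effective POVM $\{M_w\}$, so there is no measurement context left to violate, and the model is measurement non-contextual. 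Hence it is universally non-contextual, which is what we had to produce.

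\emph{Where the difficulty lies.} The only substantive step is the operator representation of Step~1, i.e.\ converting the operational convexity and normalisation axioms into $M_w \ge 0$ and $\sum_w M_w = \iden$ (with some care about closure and boundary issues if one wants full generality beyond finite dimension). Given that, Step~2 is the standard observation that a single measurement together with arbitrary preparations always admits the trivial universally non-contextual model; the only bookkeeping is to refine $\Lambda$ so that two distinct work values sharing an effect do not generate a spurious measurement context, and, if desired, to invoke the diagonal identity from Step~1 so that the same model simultaneously accounts for the TPM energy measurements appearing in the protocol.
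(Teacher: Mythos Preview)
Your argument is correct for the theorem as stated, but it takes a genuinely different route from the paper. The paper does not use the tautological ``ontic state $=$ outcome'' model. Instead, it invokes the main structural result of Ref.~\cite{perarnau2016no}: the FT-protocol hypothesis together with assumption~\ref{ass0} forces $M_w(\mathcal{P}) = M^{\rm tpm}_w(\mathcal{P}) = |\bra{j'}U\ket{i}|^2\,\ketbra{i}{i}$, i.e.\ the work POVM is \emph{diagonal} in the $H(0)$ eigenbasis. The ontic space is then taken to be the energy labels $\lambda$, with $p(\lambda|\rho)=\bra{\lambda}\rho\ket{\lambda}$ and $p(w|\lambda,M_w)=\tr{}{M^{\rm tpm}_w\ketbra{\lambda}{\lambda}}$.

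What the two approaches buy is different. Your construction is more elementary and shows that the FT hypothesis is, strictly speaking, unnecessary for the bare statement: any single POVM on any state space admits the trivial non-contextual model you wrote down. The paper's construction is less general in this sense but carries more content: the ontic state is a \emph{classical energy level} (a physically transparent hidden variable), the epistemic state $p(\lambda|\rho)=\bra{\lambda}\rho\ket{\lambda}$ is independent of $U$ and $H(\tau)$, and the response function is manifestly a function of the effect $M^{\rm tpm}_w$ alone. Consequently the paper's model covers \emph{all} drivings simultaneously with a single ontic space and a single $p(\lambda|\rho)$, whereas in your model $p(\lambda|\rho)=\tr{}{\rho M_\lambda}$ changes with the protocol---harmless for a fixed $\mathcal{P}$, but it is precisely this distinction that underlies the paper's remark (Supplemental Material~C) that the FT hypothesis is what secures non-contextuality once one allows the work POVM to range over arbitrary measurements. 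Finally, note that your deterministic response $\delta_{w',w}$ is a function of the \emph{label} $w'$ rather than of the effect $M_{w'}$; you flag this, and the fix (quotient by coinciding effects) is routine, but the paper's model avoids the issue entirely because $p(w|\lambda,M_w)=\bra{\lambda}M_w\ket{\lambda}$ is already a function of the operator.
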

Theorem~\ref{thm:nc} says that in any FT protocol we either 
\begin{enumerate}
	\item[(a)] Lift the assumption that the work distribution should be a probability (assumption \ref{ass0}).
	\item[(b)] Only probe quantum effects admitting a classical non-contextual model.
\end{enumerate}  

\subsection{Existence of a work distribution forces non-contextuality}
\label{sec:proof1}

Let us prove Theorem~\ref{thm:nc}. From now on, we focus on the case in which for every $w$ there exists a unique couple of indexes $i,j$ such that \mbox{$E'_j -E_i = w$}. Let $P$ and $M$ denote the preparation and measurement procedures involved in the FT protocol $\mathcal{P}$. As shown in Ref.~\cite{perarnau2016no}, assumption~\ref{ass0} can be reformulated as follows: there exists a POVM $M(\mathcal{P}) = \{M_w(\mathcal{P})\}$, such that
\begin{equation}
\label{eq:povmwork}
p(w|\mathcal{P}) = \tr{}{M_w(\mathcal{P}) \rho},
\end{equation}
with $M_w(\mathcal{P})$ being a function of $H(0)$, $U$ and $H(\tau)$, but \emph{not} of $\rho$.

We want to derive the existence of a non-contextual model reproducing the observed work statistics $p(w|\mathcal{P})$, i.e., from Eq.~\eqref{eq:ontologicalreproduce},
\begin{equation}
\label{eq:task}
\tr{}{\rho M_w(\mathcal{P})} = \int_\Lambda d\lambda p(\lambda|\rho) p(w|\lambda,M_w(\mathcal{P})).
\end{equation}

This, in fact, arises as a simple consequence of the main results of Ref.~\cite{perarnau2016no}. There it is shown that assumption~\ref{ass0} enforces on the FT protocol \mbox{$M_w(\mathcal{P}) = M^{\rm tpm}_w(\mathcal{P})$}, where $M^{\rm tpm}_w(\mathcal{P})$ is the two-point-measurement POVM~\cite{roncaglia2014work}, $M^{\rm tpm}_w(\mathcal{P}) = |\bra{j'}U\ket{i}|^2 \ketbra{i}{i}$. $\Lambda$ can then be taken to be a space labelling energies and for all $H(0)$, $U$ and $H(\tau)$,
	\begin{equation}
\label{eq:ontologicalmodel}
\begin{split}
p(\lambda|\rho) &= \bra{\lambda}\rho \ket{\lambda}, \\
p(w|\lambda,M_w(\mathcal{P})) &= \tr{}{M^{\rm tpm}_w(\mathcal{P}) \ketbra{\lambda}{\lambda}}.
\end{split}
	\end{equation}
Substituting Eqs.~\eqref{eq:ontologicalmodel} in Eq.~\eqref{eq:task}, and using Eq.~\eqref{eq:povmwork} we get the claimed result.

As it is known, there are prepare and measure experiments that \emph{cannot} be reproduced by a non-contextual mechanism \cite{spekkens2005contextuality, ferrie2011quasi}. While Bob can in principle emulate core aspects of many quantum phenomena with a classical mechanism \cite{spekkens2007evidence, bartlett2012reconstruction, spekkens2016quasi}, contextuality is beyond his reach. Theorem~\ref{thm:nc} says that any FT protocol satisfying assumption~\ref{ass0} will not allow the correspondent experiments to manifest genuine non-classicality, i.e. it will restricts us to probing a ``fragment'' of quantum theory \cite{leifer2014quantum, jennings2016no} that admits a classical representation \cite{ferrie2011quasi}. 

Some clarifications are now in order. First, when there is no  POVM satisfying Eq.~\eqref{eq:povmwork} (i.e., assumption~\ref{ass0} is lifted), 
we study if non-contextual ontological models exist for the scheme collecting the statistics through which  $p(w|\mathcal{P})$ is \emph{reconstructed}. Lifting assumption~\ref{ass0} expands the set of FT protocols $\mathcal{P}$ beyond the TPM POVM; but this is distinct from proving that any of the reconstruction protocols lacks a non-contextual mechanism. 

Second, Ref.~\cite{spekkens2008negativity} highlighted that the absence of non-negative quasi-probability \emph{representations} for a given protocol and contextuality are the same concept. However, one should be careful not to identify the negativity of $p(w|\mathcal{P})$, reconstructed from a set of preparations and measurements in~$\mathcal{P}$, with the negativity of \emph{every} quasi-probability representations of such preparations and measurements ($p(w|\mathcal{P})$ is not a representation, see Supplemental Material~C). It is a non-trivial task to define FT protocols in which negativity of a work quasi-probability can be provably associated to contextuality. This is what we will do in the rest of this work.

\subsection{Negativity of the work distribution implies contextuality}

Since we are interested in FT protocols able to witness genuinely non-classical features, due to Theorem~\ref{thm:nc} we investigate here the possibility of lifting assumption~\ref{ass0}. This means that $p(w|\mathcal{P})$ is a quasi-probability, exhibiting negativity (as in Ref.~\cite{allahverdyan2014nonequilibrium}) or lacking convexity (as in Ref.~\cite{sampaio2017impossible}). Here we investigate the former possibility. 

Firstly, let us describe a family of protocols that smoothly interpolates between the TPM protocol, estimating $p_{ \rm tpm}(w|\mathcal{P})$, and a protocol probing the recent work quasi-probability introduced by Allahverdyan in Ref.~\cite{allahverdyan2014nonequilibrium}, denoted by $p_{ \rm weak}(w|\mathcal{P})$. While the former is obtained in the strong (projective) measurement limit, the latter is achieved through a \emph{weak measurement} \cite{aharonov1988result, dressel2014colloquium}. Secondly, we will show that the negativity of $p_{ \rm weak}(w|\mathcal{P})$ directly signals contextuality of the weak measurement protocol. 

Consider the following one-parameter family of protocols, parametrized by $s \in \R$, involving these steps:\footnote{For related schemes, see Refs.~\cite{solinas2016probing, talkner2016aspects, hofer2017quasi} and references therein.}
 \begin{enumerate}
	\item  A measurement device or ``pointer'', represented by a one-dimensional quantum system with canonical observables $X$ and $P$, is prepared in a Gaussian state with spread $s$:
	\begin{equation}
	\label{eq:pointer}
		\ket{\Psi} = (\pi s^2)^{-1/4} \int dx \exp \left(-\frac{x^2}{2s^2}\right) \ket{x}.
	\end{equation}
	The system is prepared in state $\rho$, initially uncorrelated from the device.
	\item The device is coupled to the system through the interaction Hamiltonian $H_{ {\rm int}} = g(t) \mathcal{E}_i \otimes P$ over a time interval $[-t_M,0]$ (recall $\mathcal{E}_i = \ketbra{i}{i}$). We can choose units such that $g = \int_{-t_M}^{0} dt g(t) =1$. After the interaction, a projective measurement $\{\ketbra{x}{x}\}$ on the device induces a corresponding POVM $\{M^s_x\}_{x \in \R}$ on the system. In the limit $s \rightarrow \infty$, this is called a \emph{weak measurement} of $\mathcal{E}_i$.
	\item At $t=0$, the system is evolved according to $U$, the driving unitary of Eq.~\eqref{eq:protocols} (we neglect the free evolution of the system during the measurement).
	\item Finally, at $t=\tau$, a projective measurement of $H(\tau)$ is performed on the system and outcome $j$ is postselected.
\end{enumerate}

Denote by $q_j$ the probability of observing outcome $j$ in the final measurement. Moreover, let $\langle X \rangle_{j}$ be the expectation value of the shift in the pointer upon postselecting outcome $j$. There are two important limits (see Supplemental Material~B):
\begin{enumerate}
	\item In the strong measurement limit, $s \rightarrow 0$, $\{M^s_x\}$ approaches a projective measurement and
	\begin{equation*}
	q_j \langle X \rangle_{j} \rightarrow p_{\rm tpm}(w|\mathcal{P})= p_i p_{i|j}.
	\end{equation*} 
	\item In the weak measurement limit, $s \rightarrow \infty$, 
	\begin{equation*}
	q_j \langle X \rangle_{j} \rightarrow p_{\rm weak}(w|\mathcal{P})= \Re \tr{}{\rho \mathcal{E}_i \Pi_j }.
	\end{equation*}
\end{enumerate}
Here we defined $\Pi_j = U^\dag \ketbra{j'}{j'} U$ and $p_{\rm weak}(w|\mathcal{P})$ corresponds to the work quasi-distribution recently introduced by Allahverdyan \cite{allahverdyan2014nonequilibrium}:
\begin{equation}
\label{eq:allahberdyan}
p_{\rm{weak}}(w|\mathcal{P}) = \Re \tr{}{\rho \mathcal{E}_i \Pi_j}, \quad w= E'_j- E_i.
\end{equation}
  We see that both $p_{\rm tpm}(w|\mathcal{P})$ and $p_{\rm weak}(w|\mathcal{P})$ are obtained within the same general class of protocols. $p_{\rm weak}$ is known as the Margenau-Hill distribution \cite{margenau1961correlation}. The weak measurement protocol so defined is a FT protocol and we notice in passing that it satisfies assumption~\ref{ass2}. Furthermore, it gives rise to a FT~\cite{allahverdyan2014nonequilibrium}:
  \begin{equation}
  \label{eq:allahverdyanequality}
  \langle e^{-\beta W} \rangle = e^{-\beta \Delta F}\Upsilon,
  \end{equation}
  where $\Upsilon = \Re \tr{}{U^\dag\gamma(\tau)U \gamma(0)^{-1} \rho}$ and $\gamma(t)= e^{-\beta H(t)}/\tr{}{e^{-\beta H(t)}}$. If $\rho = \gamma(0)$, we have $\Upsilon=1$ and the equality of Eq.~\eqref{eq:jar} is recovered (for a full interpretation of Eq.~\eqref{eq:allahverdyanequality}, see Ref.~\cite{allahverdyan2014nonequilibrium}).

While $p_{\rm{weak}}$ can attain negative values, contrary to Ref.~\cite{allahverdyan2014nonequilibrium} we suggest this is not a limitation. Quite the opposite, due to Theorem~\ref{thm:nc} negativity is necessary to probe contextuality ($p_{\rm{weak}}$ is convex in $\rho$). Remarkably, it is also \emph{sufficient}. Recall that a classical mechanism is called \emph{outcome deterministic} if projective measurements $M$ give a deterministic response, i.e., $p(k|\lambda, M) \in \{0,1\}$. With reference to the protocols provided, we have:
\begin{thm}
	\label{thm:contextuality}
	Let $\rho$ be a quantum state, $\mathcal{E}_i$ and $\Pi_j$ projectors. If $\Re \tr{}{\rho \mathcal{E}_i \Pi_j} <0$, for $s$ large enough there is no measurement non-contextual ontological model for preparation $\rho$, measurement $\{M^s_x\}_{x \in \R}$ and post-selection $\Pi_j$ that satisfies outcome determinism.   
\end{thm}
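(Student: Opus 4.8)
The plan is to adapt the strategy used by Pusey to show that anomalous weak values of projectors are proofs of contextuality. The first move is to fold the whole protocol into a single measurement on the initial state. The weak measurement has Kraus operators $K_x=\sqrt{M^s_x}=\Psi(x-1)\mathcal{E}_i+\Psi(x)(\iden-\mathcal{E}_i)$, with $\Psi(x)=(\pi s^2)^{-1/4}e^{-x^2/(2s^2)}$, so the joint event ``pointer reads $x$ and the final projective measurement gives $j$'' is the effect $E_{x,j}=K_x^\dagger\Pi_j K_x$, and $\{E_{x,j}\}_{x,j}$ is a bona fide POVM. A short Gaussian integral gives the first-moment operator
\begin{equation*}
G^s_j:=\int dx\,x\,E_{x,j}=\mathcal{E}_i\Pi_j\mathcal{E}_i+\kappa_s\big(\mathcal{E}_i\Pi_j(\iden-\mathcal{E}_i)+(\iden-\mathcal{E}_i)\Pi_j\mathcal{E}_i\big),\quad \kappa_s=\tfrac12 e^{-1/(4s^2)},
\end{equation*}
so that $q_j\langle X\rangle_j=\tr{}{\rho\,G^s_j}$; since $\kappa_s\to\tfrac12$ this reproduces $q_j\langle X\rangle_j\to\Re\tr{}{\rho\,\mathcal{E}_i\Pi_j}$ as $s\to\infty$ (and $q_j\langle X\rangle_j\to p_i p_{j|i}$ as $s\to 0$).

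Next I would assume, for contradiction, a measurement non-contextual ontological model with outcome determinism for sharp measurements that reproduces the statistics of $\rho$, of $\{M^s_x\}_x$, of the projective measurement $\{\Pi_j\}_j$, and of all measurements obtained from these and from $\{\mathcal{E}_i,\iden-\mathcal{E}_i\}$ by classical post-processing, coarse-graining and mixing (in particular $\{E_{x,j}\}$ and $\{D_j\}$, $D_j:=\mathcal{E}_i\Pi_j\mathcal{E}_i+(\iden-\mathcal{E}_i)\Pi_j(\iden-\mathcal{E}_i)$). By the standard consequence of measurement non-contextuality, the response functions are affine in the effect and hence the restriction of a linear functional $\lambda\mapsto\ell_\lambda$ on the span of the reachable effects, with $\ell_\lambda(\iden)=1$ and $0\le\ell_\lambda(E)\le1$ on effects. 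Outcome determinism forces $e(\lambda):=\ell_\lambda(\mathcal{E}_i)\in\{0,1\}$ and $a(\lambda):=\ell_\lambda(\Pi_j)\in\{0,1\}$. Because $\mathcal{E}_i$ is rank one, $\mathcal{E}_i\Pi_j\mathcal{E}_i=p_{j|i}\mathcal{E}_i$, so $\ell_\lambda(\mathcal{E}_i\Pi_j\mathcal{E}_i)=p_{j|i}e(\lambda)$; and since $(\iden-\mathcal{E}_i)\Pi_j(\iden-\mathcal{E}_i)$ and $D_j$ are themselves reachable effects, $d(\lambda):=\ell_\lambda(D_j)\in[0,1]$ with $d(\lambda)\ge p_{j|i}e(\lambda)$.

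The crux is then to use nothing more than the positivity of the response \emph{density} of the combined POVM. Expanding $E_{x,j}=\Psi(x-1)^2\,\mathcal{E}_i\Pi_j\mathcal{E}_i+\Psi(x)^2\,(\iden-\mathcal{E}_i)\Pi_j(\iden-\mathcal{E}_i)+\Psi(x-1)\Psi(x)\,(\Pi_j-D_j)$ and applying $\ell_\lambda$,
\begin{equation*}
0\le\ell_\lambda(E_{x,j})=\Psi(x-1)^2 p_{j|i}e+\Psi(x)^2(d-p_{j|i}e)+\Psi(x-1)\Psi(x)(a-d)\qquad\text{for all }x.
\end{equation*}
Dividing by $\Psi(x)^2$ and putting $r=\Psi(x-1)/\Psi(x)=e^{(x-1/2)/s^2}$, which sweeps $(0,\infty)$, this is an upward quadratic in $r$ that must stay nonnegative on $(0,\infty)$; with $e,a\in\{0,1\}$, $d\in[0,1]$ and $d\ge p_{j|i}e$, a short case check shows this excludes precisely the configurations for which $\ell_\lambda(G^s_j)=p_{j|i}e(\lambda)+\kappa_s\big(a(\lambda)-d(\lambda)\big)$ could be negative (the only surviving ``dangerous'' case forces $d=2p_{j|i}$, whence $\ell_\lambda(G^s_j)=p_{j|i}(1-2\kappa_s)\ge0$). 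Hence $\ell_\lambda(G^s_j)\ge0$ for every $\lambda$, so $q_j\langle X\rangle_j=\int d\lambda\,p(\lambda|\rho)\,\ell_\lambda(G^s_j)\ge0$ for all finite $s$, contradicting $q_j\langle X\rangle_j\to\Re\tr{}{\rho\,\mathcal{E}_i\Pi_j}<0$ once $s$ is large enough.

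The main obstacle — and the reason for recasting everything as the single POVM $\{E_{x,j}\}$ — is that in a generic ontological model one has no control over how the weak measurement disturbs the ontic state, and the naive bound (the conditional response to $\Pi_j$ lies in $[0,1]$) is worthless because the pointer spreads over a window of size $\sim s$. Measurement non-contextuality applied to $\{E_{x,j}\}$ is what rigidifies the $x$-dependence of the response into an affine combination of the $0/1$-valued deterministic responses to $\mathcal{E}_i$ and $\Pi_j$ and of noise parameters in $[0,1]$, and this rigidity, together with positivity of the density at large $|x|$, is already incompatible with an ``anomalous'' sign. Some care is also needed to specify exactly which auxiliary measurements the model must represent non-contextually, and it is here that $\mathcal{E}_i$ (and $\Pi_j$) being rank-one projectors enters.
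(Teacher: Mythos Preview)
Your argument is correct and shares the paper's opening move---folding the weak measurement plus post-selection into a single POVM on the initial state, as in Pusey's proof---but the execution diverges from that point on. The paper bounds the \emph{half-line probability} $p^s_- = p_\Pi^{-1}\int_{-\infty}^{0}\tr{}{S^s_x\rho}\,dx$: it uses coarse-graining $p(x|S^s_x,\lambda)\le p(x|M^s_x,\lambda)$ together with the median of the Gaussian to get $\int_{-\infty}^{0}p(x|S^s_x,\lambda)\,dx\le 1/2$, and separately uses the decomposition $S^s=(1-p^s_d)\Pi+p^s_d E_d$ plus outcome determinism on $\Pi$ to bound the same integral by $p^s_d$ on the set $\Lambda_0$; averaging yields $p^s_-\le 1/2+p^s_d/p_\Pi$, which a direct Gaussian expansion shows is violated when $\Re\tr{}{\rho\mathcal{E}\Pi}<0$. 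You instead bound the \emph{first-moment operator} $G^s_j=\int x\,E_{x,j}\,dx$ ontic-state by ontic-state, exploiting that positivity of the response density $\ell_\lambda(E_{x,j})\ge 0$ for all $x$ is a quadratic constraint in $r=\Psi(x-1)/\Psi(x)$; the case analysis then pins down $\ell_\lambda(G^s_j)\ge 0$.

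Your route is arguably slicker at the endgame (a single quadratic-positivity check rather than two separate integral bounds), but it costs you generality: the identity $\mathcal{E}_i\Pi_j\mathcal{E}_i=p_{j|i}\mathcal{E}_i$ on which your case split rests requires $\mathcal{E}_i$ to be rank one, whereas the paper's proof goes through for arbitrary projectors $\mathcal{E}$ and $\Pi$. In the paper's physical setting the energy projectors are rank one, so your proof suffices for the application, but it does not establish the theorem as stated. (Your parenthetical that $\Pi_j$ must also be rank one is not actually used anywhere in your argument.)
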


The complete proof is given in Supplemental Material~D. The quantity $\Re \tr{}{\rho \mathcal{E}_i \Pi_j}/\tr{}{\rho \Pi_j}$ is known as \emph{generalised weak value} \cite{wiseman2002weak} and its negative values are called \emph{anomalous}. Hence, the above theorem is an extension of the main result of Ref.~\cite{pusey2014anomalous} to mixed states.\footnote{One may define generalised weak values as $\tr{}{\rho \mathcal{E}_i \Pi_j}/\tr{}{\rho \Pi_j}$, since this is a direct generalisation of the initial proposal~\cite{aharonovsusskind,dressel2010contextual}. However, complex weak values can be achieved in Gaussian quantum mechanics, which admits a non-contextual model \cite{karanjai2015weak, bartlett2012reconstruction}. Hence, following Ref.~\cite{pusey2014anomalous}, we focused on the real part.} The result implies that the observation of a generalised anomalous weak value provides a proof of contextuality of the FT protocol introduced above (note that $U$ is included in~$\Pi_j$).

We need to be more precise here, since the theorem involves the condition of outcome determinism. First note that this condition is indeed necessary to get any result: a measurement non-contextual (but not outcome deterministic) model exists for full quantum mechanics~\cite{spekkens2005contextuality}, and hence for any FT protocol. Second, note that many authors include outcome determinism in the definition of contextuality. This is the case of Kochen and Specker theorem \cite{kochen1975problem}. Third, the following corollary of Theorem~\ref{thm:nc} holds:
\begin{corol}
	Assume quantum mechanics holds. If $\Re \tr{}{\rho \mathcal{E}_i \Pi_j} <0$, for $s$ large enough there is no universally non-contextual model for preparation $\rho$, measurement $\{M^s_x\}_{x \in \R}$ and post-selection $\Pi_j$.
\end{corol}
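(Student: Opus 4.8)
The plan is to combine Theorem~\ref{thm:contextuality} with the standard fact that, within quantum theory, preparation non-contextuality forces outcome determinism on every sharp measurement. Suppose for contradiction that, for arbitrarily large $s$, there exists a universally non-contextual ontological model reproducing the operational statistics of the preparation $\rho$, the pointer-induced POVM $\{M^s_x\}_{x \in \R}$ and the post-selection $\{\Pi_j, \iden - \Pi_j\}$, together with the auxiliary sharp measurements (in particular $\{\mathcal{E}_i, \iden - \mathcal{E}_i\}$) entering the reconstruction of $p_{\rm weak}(w|\mathcal{P})$. By definition, such a model is simultaneously preparation non-contextual and measurement non-contextual.

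First I would invoke the structural result of Ref.~\cite{spekkens2005contextuality}: in any preparation non-contextual ontological model of quantum theory, every projective measurement must be represented by outcome-deterministic response functions, $p(k|\lambda, M) \in \{0,1\}$ for $p(\lambda|\rho)$-almost every $\lambda$. The mechanism is that a rank-one projector $\ketbra{\psi}{\psi}$ lies inside several distinct convex decompositions of operationally identical mixed states into mutually orthogonal pure states; preparation non-contextuality relates the induced distributions over $\lambda$ across these decompositions, and consistency forces the response function associated with $\ketbra{\psi}{\psi}$ to take values in $\{0,1\}$, whence the same holds for $\iden - \ketbra{\psi}{\psi}$ and, by coarse-graining, for an arbitrary projective measurement. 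Applied to $\{\mathcal{E}_i, \iden - \mathcal{E}_i\}$ and to $\{\Pi_j, \iden - \Pi_j\}$, this makes the assumed model outcome deterministic on precisely the sharp projectors that appear in the hypothesis of Theorem~\ref{thm:contextuality}. No analogous constraint is --- or need be --- imposed on the unsharp POVM $\{M^s_x\}$, consistently with the fact that a measurement non-contextual (but not outcome-deterministic) model exists for all of quantum theory.

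Consequently, the assumed model is a measurement non-contextual \emph{and} outcome-deterministic ontological model for preparation $\rho$, measurement $\{M^s_x\}$ and post-selection $\Pi_j$. But $\Re \tr{}{\rho \mathcal{E}_i \Pi_j} < 0$ by hypothesis, so Theorem~\ref{thm:contextuality} asserts that no such model exists once $s$ is large enough --- a contradiction. Hence, for $s$ large enough, no universally non-contextual model reproduces the statistics of this triple, which is the claim.

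The only real content --- and the only place where ``quantum mechanics holds'' is used beyond Theorem~\ref{thm:contextuality} --- is the implication ``preparation non-contextuality $\Rightarrow$ outcome determinism on sharp measurements.'' The main thing to check is that the ensemble-decomposition argument of Ref.~\cite{spekkens2005contextuality} can be run with the specific projectors $\mathcal{E}_i$ and $\Pi_j = U^\dagger \ketbra{j'}{j'} U$ at hand: this is fine, since they are genuine rank-one projectors and one is free to include in the protocol the finitely many ancillary preparations the argument requires. Everything else is a direct composition of this fact with Theorem~\ref{thm:contextuality}, so I would present the Corollary with a short proof in the main text and the routine verifications deferred to the Supplemental Material.
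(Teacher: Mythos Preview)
Your proposal is correct and follows exactly the route the paper takes: invoke the result of Refs.~\cite{spekkens2005contextuality, spekkens2014status} that, within quantum theory, preparation non-contextuality implies outcome determinism on projective measurements, and then feed this into Theorem~\ref{thm:contextuality} to derive the contradiction. The paper's justification is a single sentence to precisely this effect, so your more detailed unpacking of the Spekkens argument is simply an expanded version of the same idea.
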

This corollary holds because some elementary operational conditions (obviously satisfied by the operational theory associated to quantum mechanics) are sufficient to prove outcome determinism from preparation non-contextuality, as detailed in Refs.~\cite{spekkens2005contextuality, spekkens2014status}. 

A consequence of Theorem~\ref{thm:contextuality} is that in the presence of non-commutativity there is always a state able to witness contextuality in the FT protocol given above: for any $i,j$, with $[\mathcal{E}_i,\Pi_j]\neq 0$, there are quantum states $\rho$ such that $p_{\rm weak}(w|\mathcal{P})<0$ \cite{allahverdyan2014nonequilibrium}. Conversely, necessarily one must have $[\rho,\mathcal{E}_i] \neq 0$ and $[\rho,\Pi_j] \neq 0$ for some $i,j$ to observe negative values of the work distribution.

\subsection{Conclusions}

In this paper we presented the first example of contextuality in a thermodynamic framework. The no-go result of Theorem~\ref{thm:nc} shows that FT protocols are unable to access contextuality, \emph{unless} the notion of work distribution is extended to a work quasi-probability, lacking non-negativity or convexity. Conversely, from Theorem~\ref{thm:contextuality} we have seen that the negative values of a work quasi-probability (accessible through weak measurements) imply contextuality of a FT protocol that naturally generalises the TPM scheme. 

Importantly, since non-contextual models exist reproducing phenomena such as quantum interference, complementarity and non-commutativity -- among others \cite{spekkens2007evidence, bartlett2012reconstruction} --  contextuality cannot be understood as a consequence of measurement disturbance and lack of knowledge of an underlying classical variable. Hence, the negativity of $p_{\rm weak}$ is inherently non-classical (differentiating these protocols from TPM schemes). While a FT exists for $p_{\rm weak}$ (Eq.~\eqref{eq:allahverdyanequality}), it will be important to derive a \emph{direct} thermodynamic interpretation of negativity.

 This work also paves the way to an experimental verification of contextuality in a FT protocol, but more work needs to be done to make the present proposal robust to experimental imperfections~\cite{kunjwal2015kochen, mazurek2016experimental}.
 
 Analogous questions are being investigated in the context of quantum computing \cite{veitch2012negative, howard2014contextuality, veitch2014resource, delfosse2015wigner, bermejo2016contextuality}, and this work suggests a path to finding quantum advantages in thermodynamics. Eventually we hope this will lead to the design of thermodynamic machines, such as engines, whose performance provably outperform classical counterparts, independently of the specific assumptions on the underlying model.

\bigskip

\textbf{Acknowledgments.} 
I would like to thank A.~Acin, J.~Bowles, D.~Jennings, K.~Korzekwa, R.~Kunjwal, M.~Oszmaniec and M.~Perarnau-Llobet for insightful discussions and comments on a previous draft. I acknowledge financial support from the Spanish MINECO (Severo Ochoa
SEV-2015-0522 and project QIBEQI FIS2016-80773-P), Fundacio Cellex, Generalitat de Catalunya (CERCA
Programme and SGR 875) and COST Action MP1209.
\bibliography{Bibliography_thermodynamics}

\begin{thebibliography}{60}%
\makeatletter
\providecommand \@ifxundefined [1]{%
 \@ifx{#1\undefined}
}%
\providecommand \@ifnum [1]{%
 \ifnum #1\expandafter \@firstoftwo
 \else \expandafter \@secondoftwo
 \fi
}%
\providecommand \@ifx [1]{%
 \ifx #1\expandafter \@firstoftwo
 \else \expandafter \@secondoftwo
 \fi
}%
\providecommand \natexlab [1]{#1}%
\providecommand \enquote  [1]{``#1''}%
\providecommand \bibnamefont  [1]{#1}%
\providecommand \bibfnamefont [1]{#1}%
\providecommand \citenamefont [1]{#1}%
\providecommand \href@noop [0]{\@secondoftwo}%
\providecommand \href [0]{\begingroup \@sanitize@url \@href}%
\providecommand \@href[1]{\@@startlink{#1}\@@href}%
\providecommand \@@href[1]{\endgroup#1\@@endlink}%
\providecommand \@sanitize@url [0]{\catcode `\\12\catcode `\$12\catcode
  `\&12\catcode `\#12\catcode `\^12\catcode `\_12\catcode `\%12\relax}%
\providecommand \@@startlink[1]{}%
\providecommand \@@endlink[0]{}%
\providecommand \url  [0]{\begingroup\@sanitize@url \@url }%
\providecommand \@url [1]{\endgroup\@href {#1}{\urlprefix }}%
\providecommand \urlprefix  [0]{URL }%
\providecommand \Eprint [0]{\href }%
\providecommand \doibase [0]{http://dx.doi.org/}%
\providecommand \selectlanguage [0]{\@gobble}%
\providecommand \bibinfo  [0]{\@secondoftwo}%
\providecommand \bibfield  [0]{\@secondoftwo}%
\providecommand \translation [1]{[#1]}%
\providecommand \BibitemOpen [0]{}%
\providecommand \bibitemStop [0]{}%
\providecommand \bibitemNoStop [0]{.\EOS\space}%
\providecommand \EOS [0]{\spacefactor3000\relax}%
\providecommand \BibitemShut  [1]{\csname bibitem#1\endcsname}%
\let\auto@bib@innerbib\@empty
\bibitem [{\citenamefont {Goold}\ \emph {et~al.}(2016)\citenamefont {Goold},
  \citenamefont {Huber}, \citenamefont {Riera}, \citenamefont {del Rio},\ and\
  \citenamefont {Skrzypczyk}}]{goold2016role}%
  \BibitemOpen
  \bibfield  {author} {\bibinfo {author} {\bibfnamefont {J.}~\bibnamefont
  {Goold}}, \bibinfo {author} {\bibfnamefont {M.}~\bibnamefont {Huber}},
  \bibinfo {author} {\bibfnamefont {A.}~\bibnamefont {Riera}}, \bibinfo
  {author} {\bibfnamefont {L.}~\bibnamefont {del Rio}}, \ and\ \bibinfo
  {author} {\bibfnamefont {P.}~\bibnamefont {Skrzypczyk}},\ }\href {\doibase
  10.1088/1751-8113/49/14/143001} {\bibfield  {journal} {\bibinfo  {journal}
  {Journal of Physics A: Mathematical and Theoretical}\ }\textbf {\bibinfo
  {volume} {49}},\ \bibinfo {pages} {143001} (\bibinfo {year}
  {2016})}\BibitemShut {NoStop}%
\bibitem [{\citenamefont {Vinjanampathy}\ and\ \citenamefont
  {Anders}(2016)}]{vinjanampathy2016quantum}%
  \BibitemOpen
  \bibfield  {author} {\bibinfo {author} {\bibfnamefont {S.}~\bibnamefont
  {Vinjanampathy}}\ and\ \bibinfo {author} {\bibfnamefont {J.}~\bibnamefont
  {Anders}},\ }\href {\doibase 10.1080/00107514.2016.1201896} {\bibfield
  {journal} {\bibinfo  {journal} {Contemporary Physics}\ }\textbf {\bibinfo
  {volume} {57}},\ \bibinfo {pages} {545} (\bibinfo {year} {2016})}\BibitemShut
  {NoStop}%
\bibitem [{\citenamefont {Jarzynski}(1997)}]{jarzynski1997nonequilibrium}%
  \BibitemOpen
  \bibfield  {author} {\bibinfo {author} {\bibfnamefont {C.}~\bibnamefont
  {Jarzynski}},\ }\href {\doibase 10.1103/PhysRevLett.78.2690} {\bibfield
  {journal} {\bibinfo  {journal} {Phys. Rev. Lett.}\ }\textbf {\bibinfo
  {volume} {78}},\ \bibinfo {pages} {2690} (\bibinfo {year}
  {1997})}\BibitemShut {NoStop}%
\bibitem [{\citenamefont {{Crooks}}(1999)}]{crooks1999entropy}%
  \BibitemOpen
  \bibfield  {author} {\bibinfo {author} {\bibfnamefont {G.~E.}\ \bibnamefont
  {{Crooks}}},\ }\href {\doibase 10.1103/PhysRevE.60.2721} {\bibfield
  {journal} {\bibinfo  {journal} {\pre}\ }\textbf {\bibinfo {volume} {60}},\
  \bibinfo {pages} {2721} (\bibinfo {year} {1999})}\BibitemShut {NoStop}%
\bibitem [{\citenamefont {Sevick}\ \emph {et~al.}(2008)\citenamefont {Sevick},
  \citenamefont {Prabhakar}, \citenamefont {Williams},\ and\ \citenamefont
  {Searles}}]{sevick2008fluctuation}%
  \BibitemOpen
  \bibfield  {author} {\bibinfo {author} {\bibfnamefont {E.~M.}\ \bibnamefont
  {Sevick}}, \bibinfo {author} {\bibfnamefont {R.}~\bibnamefont {Prabhakar}},
  \bibinfo {author} {\bibfnamefont {S.~R.}\ \bibnamefont {Williams}}, \ and\
  \bibinfo {author} {\bibfnamefont {D.~J.}\ \bibnamefont {Searles}},\ }\href
  {\doibase 10.1146/annurev.physchem.58.032806.104555} {\bibfield  {journal}
  {\bibinfo  {journal} {Ann. Rev. Phys. Chem.}\ }\textbf {\bibinfo {volume}
  {59}},\ \bibinfo {pages} {603} (\bibinfo {year} {2008})}\BibitemShut
  {NoStop}%
\bibitem [{\citenamefont {Campisi}\ \emph {et~al.}(2011)\citenamefont
  {Campisi}, \citenamefont {H{\"a}nggi},\ and\ \citenamefont
  {Talkner}}]{campisi2011colloquium}%
  \BibitemOpen
  \bibfield  {author} {\bibinfo {author} {\bibfnamefont {M.}~\bibnamefont
  {Campisi}}, \bibinfo {author} {\bibfnamefont {P.}~\bibnamefont {H{\"a}nggi}},
  \ and\ \bibinfo {author} {\bibfnamefont {P.}~\bibnamefont {Talkner}},\ }\href
  {\doibase 10.1103/RevModPhys.83.771} {\bibfield  {journal} {\bibinfo
  {journal} {Rev. Mod. Phys.}\ }\textbf {\bibinfo {volume} {83}},\ \bibinfo
  {pages} {771} (\bibinfo {year} {2011})}\BibitemShut {NoStop}%
\bibitem [{\citenamefont {Jarzynski}(2011)}]{jarzynski2011equalities}%
  \BibitemOpen
  \bibfield  {author} {\bibinfo {author} {\bibfnamefont {C.}~\bibnamefont
  {Jarzynski}},\ }\href {\doibase 10.1146/annurev-conmatphys-062910-140506}
  {\bibfield  {journal} {\bibinfo  {journal} {Ann. Rev. Condens. Matter Phys.}\
  }\textbf {\bibinfo {volume} {2}},\ \bibinfo {pages} {329} (\bibinfo {year}
  {2011})}\BibitemShut {NoStop}%
\bibitem [{\citenamefont {Allahverdyan}\ and\ \citenamefont
  {Nieuwenhuizen}(2005)}]{allahverdyan2005fluctuations}%
  \BibitemOpen
  \bibfield  {author} {\bibinfo {author} {\bibfnamefont {A.~E.}\ \bibnamefont
  {Allahverdyan}}\ and\ \bibinfo {author} {\bibfnamefont {T.~M.}\ \bibnamefont
  {Nieuwenhuizen}},\ }\href {\doibase 10.1103/PhysRevE.71.066102} {\bibfield
  {journal} {\bibinfo  {journal} {Phys. Rev. E}\ }\textbf {\bibinfo {volume}
  {71}},\ \bibinfo {pages} {066102} (\bibinfo {year} {2005})}\BibitemShut
  {NoStop}%
\bibitem [{\citenamefont {Albash}\ \emph {et~al.}(2013)\citenamefont {Albash},
  \citenamefont {Lidar}, \citenamefont {Marvian},\ and\ \citenamefont
  {Zanardi}}]{albash2013fluctuation}%
  \BibitemOpen
  \bibfield  {author} {\bibinfo {author} {\bibfnamefont {T.}~\bibnamefont
  {Albash}}, \bibinfo {author} {\bibfnamefont {D.~A.}\ \bibnamefont {Lidar}},
  \bibinfo {author} {\bibfnamefont {M.}~\bibnamefont {Marvian}}, \ and\
  \bibinfo {author} {\bibfnamefont {P.}~\bibnamefont {Zanardi}},\ }\href
  {\doibase 10.1103/PhysRevE.88.032146} {\bibfield  {journal} {\bibinfo
  {journal} {Phys. Rev. E}\ }\textbf {\bibinfo {volume} {88}},\ \bibinfo
  {pages} {032146} (\bibinfo {year} {2013})}\BibitemShut {NoStop}%
\bibitem [{\citenamefont {Rastegin}\ and\ \citenamefont {\ifmmode~\dot{Z}\else
  \.{Z}\fi{}yczkowski}(2014)}]{rastegin2014jarzynski}%
  \BibitemOpen
  \bibfield  {author} {\bibinfo {author} {\bibfnamefont {A.~E.}\ \bibnamefont
  {Rastegin}}\ and\ \bibinfo {author} {\bibfnamefont {K.}~\bibnamefont
  {\ifmmode~\dot{Z}\else \.{Z}\fi{}yczkowski}},\ }\href {\doibase
  10.1103/PhysRevE.89.012127} {\bibfield  {journal} {\bibinfo  {journal} {Phys.
  Rev. E}\ }\textbf {\bibinfo {volume} {89}},\ \bibinfo {pages} {012127}
  (\bibinfo {year} {2014})}\BibitemShut {NoStop}%
\bibitem [{\citenamefont
  {Allahverdyan}(2014)}]{allahverdyan2014nonequilibrium}%
  \BibitemOpen
  \bibfield  {author} {\bibinfo {author} {\bibfnamefont {A.~E.}\ \bibnamefont
  {Allahverdyan}},\ }\href {\doibase 10.1103/PhysRevE.90.032137} {\bibfield
  {journal} {\bibinfo  {journal} {Phys. Rev. E}\ }\textbf {\bibinfo {volume}
  {90}},\ \bibinfo {pages} {032137} (\bibinfo {year} {2014})}\BibitemShut
  {NoStop}%
\bibitem [{\citenamefont {Solinas}\ and\ \citenamefont
  {Gasparinetti}(2015)}]{solinas2015full}%
  \BibitemOpen
  \bibfield  {author} {\bibinfo {author} {\bibfnamefont {P.}~\bibnamefont
  {Solinas}}\ and\ \bibinfo {author} {\bibfnamefont {S.}~\bibnamefont
  {Gasparinetti}},\ }\href {\doibase 10.1103/PhysRevE.92.042150} {\bibfield
  {journal} {\bibinfo  {journal} {Phys. Rev. E}\ }\textbf {\bibinfo {volume}
  {92}},\ \bibinfo {pages} {042150} (\bibinfo {year} {2015})}\BibitemShut
  {NoStop}%
\bibitem [{\citenamefont {Manzano}\ \emph {et~al.}(2015)\citenamefont
  {Manzano}, \citenamefont {Horowitz},\ and\ \citenamefont
  {Parrondo}}]{manzano2015nonequilibrium}%
  \BibitemOpen
  \bibfield  {author} {\bibinfo {author} {\bibfnamefont {G.}~\bibnamefont
  {Manzano}}, \bibinfo {author} {\bibfnamefont {J.~M.}\ \bibnamefont
  {Horowitz}}, \ and\ \bibinfo {author} {\bibfnamefont {J.~M.~R.}\ \bibnamefont
  {Parrondo}},\ }\href {\doibase 10.1103/PhysRevE.92.032129} {\bibfield
  {journal} {\bibinfo  {journal} {Phys. Rev. E}\ }\textbf {\bibinfo {volume}
  {92}},\ \bibinfo {pages} {032129} (\bibinfo {year} {2015})}\BibitemShut
  {NoStop}%
\bibitem [{\citenamefont {Aberg}(2016)}]{aberg2016fully}%
  \BibitemOpen
  \bibfield  {author} {\bibinfo {author} {\bibfnamefont {J.}~\bibnamefont
  {Aberg}},\ }\href {https://arxiv.org/abs/1601.01302} {\bibfield  {journal}
  {\bibinfo  {journal} {arXiv:1601.01302}\ } (\bibinfo {year}
  {2016})}\BibitemShut {NoStop}%
\bibitem [{\citenamefont {Alhambra}\ \emph {et~al.}(2016)\citenamefont
  {Alhambra}, \citenamefont {Masanes}, \citenamefont {Oppenheim},\ and\
  \citenamefont {Perry}}]{alhambra2016fluctuating}%
  \BibitemOpen
  \bibfield  {author} {\bibinfo {author} {\bibfnamefont {{\'A}.~M.}\
  \bibnamefont {Alhambra}}, \bibinfo {author} {\bibfnamefont {L.}~\bibnamefont
  {Masanes}}, \bibinfo {author} {\bibfnamefont {J.}~\bibnamefont {Oppenheim}},
  \ and\ \bibinfo {author} {\bibfnamefont {C.}~\bibnamefont {Perry}},\ }\href
  {\doibase 10.1103/PhysRevX.6.041017} {\bibfield  {journal} {\bibinfo
  {journal} {Phys. Rev. X}\ }\textbf {\bibinfo {volume} {6}},\ \bibinfo {pages}
  {041017} (\bibinfo {year} {2016})}\BibitemShut {NoStop}%
\bibitem [{\citenamefont {Perarnau-Llobet}\ \emph {et~al.}(2017)\citenamefont
  {Perarnau-Llobet}, \citenamefont {B{\"a}umer}, \citenamefont {Hovhannisyan},
  \citenamefont {Huber},\ and\ \citenamefont {Acin}}]{perarnau2016no}%
  \BibitemOpen
  \bibfield  {author} {\bibinfo {author} {\bibfnamefont {M.}~\bibnamefont
  {Perarnau-Llobet}}, \bibinfo {author} {\bibfnamefont {E.}~\bibnamefont
  {B{\"a}umer}}, \bibinfo {author} {\bibfnamefont {K.~V.}\ \bibnamefont
  {Hovhannisyan}}, \bibinfo {author} {\bibfnamefont {M.}~\bibnamefont {Huber}},
  \ and\ \bibinfo {author} {\bibfnamefont {A.}~\bibnamefont {Acin}},\ }\href
  {\doibase 10.1103/PhysRevLett.118.070601} {\bibfield  {journal} {\bibinfo
  {journal} {Phys. Rev. Lett.}\ }\textbf {\bibinfo {volume} {118}},\ \bibinfo
  {pages} {070601} (\bibinfo {year} {2017})}\BibitemShut {NoStop}%
\bibitem [{\citenamefont {Miller}\ and\ \citenamefont
  {Anders}(2017)}]{miller2016time}%
  \BibitemOpen
  \bibfield  {author} {\bibinfo {author} {\bibfnamefont {H.~J.}\ \bibnamefont
  {Miller}}\ and\ \bibinfo {author} {\bibfnamefont {J.}~\bibnamefont
  {Anders}},\ }\href {\doibase 10.1088/1367-2630/aa703f} {\bibfield  {journal}
  {\bibinfo  {journal} {New Journal of Physics}\ }\textbf {\bibinfo {volume}
  {19}},\ \bibinfo {pages} {062001} (\bibinfo {year} {2017})}\BibitemShut
  {NoStop}%
\bibitem [{\citenamefont {Jarzynski}\ \emph {et~al.}(2015)\citenamefont
  {Jarzynski}, \citenamefont {Quan},\ and\ \citenamefont
  {Rahav}}]{jarzynski2015quantum}%
  \BibitemOpen
  \bibfield  {author} {\bibinfo {author} {\bibfnamefont {C.}~\bibnamefont
  {Jarzynski}}, \bibinfo {author} {\bibfnamefont {H.~T.}\ \bibnamefont {Quan}},
  \ and\ \bibinfo {author} {\bibfnamefont {S.}~\bibnamefont {Rahav}},\ }\href
  {\doibase 10.1103/PhysRevX.5.031038} {\bibfield  {journal} {\bibinfo
  {journal} {Phys. Rev. X}\ }\textbf {\bibinfo {volume} {5}},\ \bibinfo {pages}
  {031038} (\bibinfo {year} {2015})}\BibitemShut {NoStop}%
\bibitem [{\citenamefont {Talkner}\ \emph {et~al.}(2007)\citenamefont
  {Talkner}, \citenamefont {Lutz},\ and\ \citenamefont
  {H{\"a}nggi}}]{talkner2007fluctuation}%
  \BibitemOpen
  \bibfield  {author} {\bibinfo {author} {\bibfnamefont {P.}~\bibnamefont
  {Talkner}}, \bibinfo {author} {\bibfnamefont {E.}~\bibnamefont {Lutz}}, \
  and\ \bibinfo {author} {\bibfnamefont {P.}~\bibnamefont {H{\"a}nggi}},\
  }\href {\doibase 10.1103/PhysRevE.75.050102} {\bibfield  {journal} {\bibinfo
  {journal} {Phys. Rev. E}\ }\textbf {\bibinfo {volume} {75}},\ \bibinfo
  {pages} {050102} (\bibinfo {year} {2007})}\BibitemShut {NoStop}%
\bibitem [{\citenamefont {Solinas}\ and\ \citenamefont
  {Gasparinetti}(2016)}]{solinas2016probing}%
  \BibitemOpen
  \bibfield  {author} {\bibinfo {author} {\bibfnamefont {P.}~\bibnamefont
  {Solinas}}\ and\ \bibinfo {author} {\bibfnamefont {S.}~\bibnamefont
  {Gasparinetti}},\ }\href {\doibase 10.1103/PhysRevA.94.052103} {\bibfield
  {journal} {\bibinfo  {journal} {Physical Review A}\ }\textbf {\bibinfo
  {volume} {94}},\ \bibinfo {pages} {052103} (\bibinfo {year}
  {2016})}\BibitemShut {NoStop}%
\bibitem [{\citenamefont {Lostaglio}\ \emph {et~al.}(2015)\citenamefont
  {Lostaglio}, \citenamefont {Jennings},\ and\ \citenamefont
  {Rudolph}}]{lostaglio2015description}%
  \BibitemOpen
  \bibfield  {author} {\bibinfo {author} {\bibfnamefont {M.}~\bibnamefont
  {Lostaglio}}, \bibinfo {author} {\bibfnamefont {D.}~\bibnamefont {Jennings}},
  \ and\ \bibinfo {author} {\bibfnamefont {T.}~\bibnamefont {Rudolph}},\ }\href
  {\doibase 10.1038/ncomms7383} {\bibfield  {journal} {\bibinfo  {journal}
  {Nat. Commun.}\ }\textbf {\bibinfo {volume} {6}},\ \bibinfo {pages} {6383}
  (\bibinfo {year} {2015})}\BibitemShut {NoStop}%
\bibitem [{\citenamefont {Talkner}\ and\ \citenamefont
  {H{\"a}nggi}(2016)}]{talkner2016aspects}%
  \BibitemOpen
  \bibfield  {author} {\bibinfo {author} {\bibfnamefont {P.}~\bibnamefont
  {Talkner}}\ and\ \bibinfo {author} {\bibfnamefont {P.}~\bibnamefont
  {H{\"a}nggi}},\ }\href {\doibase 10.1103/PhysRevE.93.022131} {\bibfield
  {journal} {\bibinfo  {journal} {Phys. Rev. E}\ }\textbf {\bibinfo {volume}
  {93}},\ \bibinfo {pages} {022131} (\bibinfo {year} {2016})}\BibitemShut
  {NoStop}%
\bibitem [{\citenamefont {Leggett}\ and\ \citenamefont
  {Garg}(1985)}]{leggett1985quantum}%
  \BibitemOpen
  \bibfield  {author} {\bibinfo {author} {\bibfnamefont {A.~J.}\ \bibnamefont
  {Leggett}}\ and\ \bibinfo {author} {\bibfnamefont {A.}~\bibnamefont {Garg}},\
  }\href {\doibase 10.1103/PhysRevLett.54.857} {\bibfield  {journal} {\bibinfo
  {journal} {Phys. Rev. Lett.}\ }\textbf {\bibinfo {volume} {54}},\ \bibinfo
  {pages} {857} (\bibinfo {year} {1985})}\BibitemShut {NoStop}%
\bibitem [{\citenamefont {Clerk}(2011)}]{clerk2011full}%
  \BibitemOpen
  \bibfield  {author} {\bibinfo {author} {\bibfnamefont {A.~A.}\ \bibnamefont
  {Clerk}},\ }\href {\doibase 10.1103/PhysRevA.84.043824} {\bibfield  {journal}
  {\bibinfo  {journal} {Phys. Rev. A}\ }\textbf {\bibinfo {volume} {84}},\
  \bibinfo {pages} {043824} (\bibinfo {year} {2011})}\BibitemShut {NoStop}%
\bibitem [{\citenamefont {Jennings}\ and\ \citenamefont
  {Leifer}(2016)}]{jennings2016no}%
  \BibitemOpen
  \bibfield  {author} {\bibinfo {author} {\bibfnamefont {D.}~\bibnamefont
  {Jennings}}\ and\ \bibinfo {author} {\bibfnamefont {M.}~\bibnamefont
  {Leifer}},\ }\href {\doibase 10.1080/00107514.2015.1063233} {\bibfield
  {journal} {\bibinfo  {journal} {Contemp. Phys.}\ }\textbf {\bibinfo {volume}
  {57}},\ \bibinfo {pages} {60} (\bibinfo {year} {2016})}\BibitemShut {NoStop}%
\bibitem [{\citenamefont {Spekkens}(2005)}]{spekkens2005contextuality}%
  \BibitemOpen
  \bibfield  {author} {\bibinfo {author} {\bibfnamefont {R.~W.}\ \bibnamefont
  {Spekkens}},\ }\href {\doibase 10.1103/PhysRevA.71.052108} {\bibfield
  {journal} {\bibinfo  {journal} {Phys. Rev. A}\ }\textbf {\bibinfo {volume}
  {71}},\ \bibinfo {pages} {052108} (\bibinfo {year} {2005})}\BibitemShut
  {NoStop}%
\bibitem [{\citenamefont {Kunjwal}\ and\ \citenamefont
  {Spekkens}(2015)}]{kunjwal2015kochen}%
  \BibitemOpen
  \bibfield  {author} {\bibinfo {author} {\bibfnamefont {R.}~\bibnamefont
  {Kunjwal}}\ and\ \bibinfo {author} {\bibfnamefont {R.~W.}\ \bibnamefont
  {Spekkens}},\ }\href {\doibase 10.1103/PhysRevLett.115.110403} {\bibfield
  {journal} {\bibinfo  {journal} {Phys. Rev. Lett.}\ }\textbf {\bibinfo
  {volume} {115}},\ \bibinfo {pages} {110403} (\bibinfo {year}
  {2015})}\BibitemShut {NoStop}%
\bibitem [{\citenamefont {Spekkens}(2014)}]{spekkens2014status}%
  \BibitemOpen
  \bibfield  {author} {\bibinfo {author} {\bibfnamefont {R.~W.}\ \bibnamefont
  {Spekkens}},\ }\href {\doibase 10.1007/s10701-014-9833-x} {\bibfield
  {journal} {\bibinfo  {journal} {Found. Phys.}\ }\textbf {\bibinfo {volume}
  {44}},\ \bibinfo {pages} {1125} (\bibinfo {year} {2014})}\BibitemShut
  {NoStop}%
\bibitem [{\citenamefont {Kochen}\ and\ \citenamefont
  {Specker}(1975)}]{kochen1975problem}%
  \BibitemOpen
  \bibfield  {author} {\bibinfo {author} {\bibfnamefont {S.}~\bibnamefont
  {Kochen}}\ and\ \bibinfo {author} {\bibfnamefont {E.~P.}\ \bibnamefont
  {Specker}},\ }in\ \href {\doibase 10.1007/978-94-010-1795-4_17} {\emph
  {\bibinfo {booktitle} {The Logico-Algebraic Approach to Quantum Mechanics}}}\
  (\bibinfo  {publisher} {Springer},\ \bibinfo {year} {1975})\ pp.\ \bibinfo
  {pages} {293--328}\BibitemShut {NoStop}%
\bibitem [{\citenamefont {Roncaglia}\ \emph {et~al.}(2014)\citenamefont
  {Roncaglia}, \citenamefont {Cerisola},\ and\ \citenamefont
  {Paz}}]{roncaglia2014work}%
  \BibitemOpen
  \bibfield  {author} {\bibinfo {author} {\bibfnamefont {A.~J.}\ \bibnamefont
  {Roncaglia}}, \bibinfo {author} {\bibfnamefont {F.}~\bibnamefont {Cerisola}},
  \ and\ \bibinfo {author} {\bibfnamefont {J.~P.}\ \bibnamefont {Paz}},\ }\href
  {\doibase 10.1103/PhysRevLett.113.250601} {\bibfield  {journal} {\bibinfo
  {journal} {Phys. Rev. Lett.}\ }\textbf {\bibinfo {volume} {113}},\ \bibinfo
  {pages} {250601} (\bibinfo {year} {2014})}\BibitemShut {NoStop}%
\bibitem [{\citenamefont {Ferrie}(2011)}]{ferrie2011quasi}%
  \BibitemOpen
  \bibfield  {author} {\bibinfo {author} {\bibfnamefont {C.}~\bibnamefont
  {Ferrie}},\ }\href {\doibase 10.1088/0034-4885/74/11/116001} {\bibfield
  {journal} {\bibinfo  {journal} {Rep. Prog. Phys.}\ }\textbf {\bibinfo
  {volume} {74}},\ \bibinfo {pages} {116001} (\bibinfo {year}
  {2011})}\BibitemShut {NoStop}%
\bibitem [{\citenamefont {Spekkens}(2007)}]{spekkens2007evidence}%
  \BibitemOpen
  \bibfield  {author} {\bibinfo {author} {\bibfnamefont {R.~W.}\ \bibnamefont
  {Spekkens}},\ }\href {\doibase 10.1103/PhysRevA.75.032110} {\bibfield
  {journal} {\bibinfo  {journal} {Phys. Rev. A}\ }\textbf {\bibinfo {volume}
  {75}},\ \bibinfo {pages} {032110} (\bibinfo {year} {2007})}\BibitemShut
  {NoStop}%
\bibitem [{\citenamefont {Bartlett}\ \emph {et~al.}(2012)\citenamefont
  {Bartlett}, \citenamefont {Rudolph},\ and\ \citenamefont
  {Spekkens}}]{bartlett2012reconstruction}%
  \BibitemOpen
  \bibfield  {author} {\bibinfo {author} {\bibfnamefont {S.~D.}\ \bibnamefont
  {Bartlett}}, \bibinfo {author} {\bibfnamefont {T.}~\bibnamefont {Rudolph}}, \
  and\ \bibinfo {author} {\bibfnamefont {R.~W.}\ \bibnamefont {Spekkens}},\
  }\href {\doibase 10.1103/PhysRevA.86.012103} {\bibfield  {journal} {\bibinfo
  {journal} {Phys. Rev. A}\ }\textbf {\bibinfo {volume} {86}},\ \bibinfo
  {pages} {012103} (\bibinfo {year} {2012})}\BibitemShut {NoStop}%
\bibitem [{\citenamefont {Spekkens}(2016)}]{spekkens2016quasi}%
  \BibitemOpen
  \bibfield  {author} {\bibinfo {author} {\bibfnamefont {R.~W.}\ \bibnamefont
  {Spekkens}},\ }in\ \href {\doibase 10.1007/978-94-017-7303-4} {\emph
  {\bibinfo {booktitle} {Quantum Theory: Informational Foundations and
  Foils}}}\ (\bibinfo  {publisher} {Springer},\ \bibinfo {year} {2016})\ pp.\
  \bibinfo {pages} {83--135}\BibitemShut {NoStop}%
\bibitem [{\citenamefont {Leifer}(2014)}]{leifer2014quantum}%
  \BibitemOpen
  \bibfield  {author} {\bibinfo {author} {\bibfnamefont {M.~S.}\ \bibnamefont
  {Leifer}},\ }\href {\doibase 10.12743/quanta.v3i1.22} {\bibfield  {journal}
  {\bibinfo  {journal} {Quanta}\ }\textbf {\bibinfo {volume} {3}},\ \bibinfo
  {pages} {67} (\bibinfo {year} {2014})}\BibitemShut {NoStop}%
\bibitem [{\citenamefont {Spekkens}(2008)}]{spekkens2008negativity}%
  \BibitemOpen
  \bibfield  {author} {\bibinfo {author} {\bibfnamefont {R.~W.}\ \bibnamefont
  {Spekkens}},\ }\href {\doibase 10.1103/PhysRevLett.101.020401} {\bibfield
  {journal} {\bibinfo  {journal} {Phys. Rev. Lett.}\ }\textbf {\bibinfo
  {volume} {101}},\ \bibinfo {pages} {020401} (\bibinfo {year}
  {2008})}\BibitemShut {NoStop}%
\bibitem [{\citenamefont {Sampaio}\ \emph {et~al.}(2017)\citenamefont
  {Sampaio}, \citenamefont {Suomela}, \citenamefont {Ala-Nissila},
  \citenamefont {Anders},\ and\ \citenamefont
  {Philbin}}]{sampaio2017impossible}%
  \BibitemOpen
  \bibfield  {author} {\bibinfo {author} {\bibfnamefont {R.}~\bibnamefont
  {Sampaio}}, \bibinfo {author} {\bibfnamefont {S.}~\bibnamefont {Suomela}},
  \bibinfo {author} {\bibfnamefont {T.}~\bibnamefont {Ala-Nissila}}, \bibinfo
  {author} {\bibfnamefont {J.}~\bibnamefont {Anders}}, \ and\ \bibinfo {author}
  {\bibfnamefont {T.}~\bibnamefont {Philbin}},\ }\href
  {https://arxiv.org/abs/1707.06159} {\bibfield  {journal} {\bibinfo  {journal}
  {arXiv:1707.06159}\ } (\bibinfo {year} {2017})}\BibitemShut {NoStop}%
\bibitem [{\citenamefont {Aharonov}\ \emph {et~al.}(1988)\citenamefont
  {Aharonov}, \citenamefont {Albert},\ and\ \citenamefont
  {Vaidman}}]{aharonov1988result}%
  \BibitemOpen
  \bibfield  {author} {\bibinfo {author} {\bibfnamefont {Y.}~\bibnamefont
  {Aharonov}}, \bibinfo {author} {\bibfnamefont {D.~Z.}\ \bibnamefont
  {Albert}}, \ and\ \bibinfo {author} {\bibfnamefont {L.}~\bibnamefont
  {Vaidman}},\ }\href {\doibase 10.1103/PhysRevLett.60.1351} {\bibfield
  {journal} {\bibinfo  {journal} {Phys. Rev. Lett.}\ }\textbf {\bibinfo
  {volume} {60}},\ \bibinfo {pages} {1351} (\bibinfo {year}
  {1988})}\BibitemShut {NoStop}%
\bibitem [{\citenamefont {Dressel}\ \emph {et~al.}(2014)\citenamefont
  {Dressel}, \citenamefont {Malik}, \citenamefont {Miatto}, \citenamefont
  {Jordan},\ and\ \citenamefont {Boyd}}]{dressel2014colloquium}%
  \BibitemOpen
  \bibfield  {author} {\bibinfo {author} {\bibfnamefont {J.}~\bibnamefont
  {Dressel}}, \bibinfo {author} {\bibfnamefont {M.}~\bibnamefont {Malik}},
  \bibinfo {author} {\bibfnamefont {F.~M.}\ \bibnamefont {Miatto}}, \bibinfo
  {author} {\bibfnamefont {A.~N.}\ \bibnamefont {Jordan}}, \ and\ \bibinfo
  {author} {\bibfnamefont {R.~W.}\ \bibnamefont {Boyd}},\ }\href {\doibase
  10.1103/RevModPhys.86.307} {\bibfield  {journal} {\bibinfo  {journal} {Rev.
  Mod. Phys.}\ }\textbf {\bibinfo {volume} {86}},\ \bibinfo {pages} {307}
  (\bibinfo {year} {2014})}\BibitemShut {NoStop}%
\bibitem [{\citenamefont {Hofer}(2017)}]{hofer2017quasi}%
  \BibitemOpen
  \bibfield  {author} {\bibinfo {author} {\bibfnamefont {P.~P.}\ \bibnamefont
  {Hofer}},\ }\href {\doibase 10.22331/q-2017-10-12-32} {\bibfield  {journal}
  {\bibinfo  {journal} {{Quantum}}\ }\textbf {\bibinfo {volume} {1}},\ \bibinfo
  {pages} {32} (\bibinfo {year} {2017})}\BibitemShut {NoStop}%
\bibitem [{\citenamefont {Margenau}\ and\ \citenamefont
  {Hill}(1961)}]{margenau1961correlation}%
  \BibitemOpen
  \bibfield  {author} {\bibinfo {author} {\bibfnamefont {H.}~\bibnamefont
  {Margenau}}\ and\ \bibinfo {author} {\bibfnamefont {R.~N.}\ \bibnamefont
  {Hill}},\ }\href {\doibase 10.1143/PTP.26.722} {\bibfield  {journal}
  {\bibinfo  {journal} {Prog. Theor. Phys.}\ }\textbf {\bibinfo {volume}
  {26}},\ \bibinfo {pages} {722} (\bibinfo {year} {1961})}\BibitemShut
  {NoStop}%
\bibitem [{\citenamefont {Wiseman}(2002)}]{wiseman2002weak}%
  \BibitemOpen
  \bibfield  {author} {\bibinfo {author} {\bibfnamefont {H.~M.}\ \bibnamefont
  {Wiseman}},\ }\href {\doibase 10.1103/PhysRevA.65.032111} {\bibfield
  {journal} {\bibinfo  {journal} {Phys. Rev. A}\ }\textbf {\bibinfo {volume}
  {65}},\ \bibinfo {pages} {032111} (\bibinfo {year} {2002})}\BibitemShut
  {NoStop}%
\bibitem [{\citenamefont {Pusey}(2014)}]{pusey2014anomalous}%
  \BibitemOpen
  \bibfield  {author} {\bibinfo {author} {\bibfnamefont {M.~F.}\ \bibnamefont
  {Pusey}},\ }\href {\doibase 10.1103/PhysRevLett.113.200401} {\bibfield
  {journal} {\bibinfo  {journal} {Phys. Rev. Lett.}\ }\textbf {\bibinfo
  {volume} {113}},\ \bibinfo {pages} {200401} (\bibinfo {year}
  {2014})}\BibitemShut {NoStop}%
\bibitem [{\citenamefont {Aharonov}\ and\ \citenamefont
  {Susskind}(1967)}]{aharonovsusskind}%
  \BibitemOpen
  \bibfield  {author} {\bibinfo {author} {\bibfnamefont {Y.}~\bibnamefont
  {Aharonov}}\ and\ \bibinfo {author} {\bibfnamefont {L.}~\bibnamefont
  {Susskind}},\ }\href {\doibase 10.1103/PhysRev.155.1428} {\bibfield
  {journal} {\bibinfo  {journal} {Phys. Rev.}\ }\textbf {\bibinfo {volume}
  {155}},\ \bibinfo {pages} {1428} (\bibinfo {year} {1967})}\BibitemShut
  {NoStop}%
\bibitem [{\citenamefont {Dressel}\ \emph {et~al.}(2010)\citenamefont
  {Dressel}, \citenamefont {Agarwal},\ and\ \citenamefont
  {Jordan}}]{dressel2010contextual}%
  \BibitemOpen
  \bibfield  {author} {\bibinfo {author} {\bibfnamefont {J.}~\bibnamefont
  {Dressel}}, \bibinfo {author} {\bibfnamefont {S.}~\bibnamefont {Agarwal}}, \
  and\ \bibinfo {author} {\bibfnamefont {A.~N.}\ \bibnamefont {Jordan}},\
  }\href {\doibase 10.1103/PhysRevLett.104.240401} {\bibfield  {journal}
  {\bibinfo  {journal} {Phys. Rev. Lett.}\ }\textbf {\bibinfo {volume} {104}},\
  \bibinfo {pages} {240401} (\bibinfo {year} {2010})}\BibitemShut {NoStop}%
\bibitem [{\citenamefont {Karanjai}\ \emph {et~al.}(2015)\citenamefont
  {Karanjai}, \citenamefont {Cavalcanti}, \citenamefont {Bartlett},\ and\
  \citenamefont {Rudolph}}]{karanjai2015weak}%
  \BibitemOpen
  \bibfield  {author} {\bibinfo {author} {\bibfnamefont {A.}~\bibnamefont
  {Karanjai}}, \bibinfo {author} {\bibfnamefont {E.~G.}\ \bibnamefont
  {Cavalcanti}}, \bibinfo {author} {\bibfnamefont {S.~D.}\ \bibnamefont
  {Bartlett}}, \ and\ \bibinfo {author} {\bibfnamefont {T.}~\bibnamefont
  {Rudolph}},\ }\href {\doibase 10.1088/1367-2630/17/7/073015} {\bibfield
  {journal} {\bibinfo  {journal} {New J. Phys.}\ }\textbf {\bibinfo {volume}
  {17}},\ \bibinfo {pages} {073015} (\bibinfo {year} {2015})}\BibitemShut
  {NoStop}%
\bibitem [{\citenamefont {Mazurek}\ \emph {et~al.}(2016)\citenamefont
  {Mazurek}, \citenamefont {Pusey}, \citenamefont {Kunjwal}, \citenamefont
  {Resch},\ and\ \citenamefont {Spekkens}}]{mazurek2016experimental}%
  \BibitemOpen
  \bibfield  {author} {\bibinfo {author} {\bibfnamefont {M.~D.}\ \bibnamefont
  {Mazurek}}, \bibinfo {author} {\bibfnamefont {M.~F.}\ \bibnamefont {Pusey}},
  \bibinfo {author} {\bibfnamefont {R.}~\bibnamefont {Kunjwal}}, \bibinfo
  {author} {\bibfnamefont {K.~J.}\ \bibnamefont {Resch}}, \ and\ \bibinfo
  {author} {\bibfnamefont {R.~W.}\ \bibnamefont {Spekkens}},\ }\href {\doibase
  10.1038/ncomms11780} {\bibfield  {journal} {\bibinfo  {journal} {Nat.
  Commun.}\ }\textbf {\bibinfo {volume} {7}} (\bibinfo {year} {2016}),\
  10.1038/ncomms11780}\BibitemShut {NoStop}%
\bibitem [{\citenamefont {Veitch}\ \emph {et~al.}(2012)\citenamefont {Veitch},
  \citenamefont {Ferrie}, \citenamefont {Gross},\ and\ \citenamefont
  {Emerson}}]{veitch2012negative}%
  \BibitemOpen
  \bibfield  {author} {\bibinfo {author} {\bibfnamefont {V.}~\bibnamefont
  {Veitch}}, \bibinfo {author} {\bibfnamefont {C.}~\bibnamefont {Ferrie}},
  \bibinfo {author} {\bibfnamefont {D.}~\bibnamefont {Gross}}, \ and\ \bibinfo
  {author} {\bibfnamefont {J.}~\bibnamefont {Emerson}},\ }\href {\doibase
  10.1088/1367-2630/14/11/113011} {\bibfield  {journal} {\bibinfo  {journal}
  {New J. Phys.}\ }\textbf {\bibinfo {volume} {14}},\ \bibinfo {pages} {113011}
  (\bibinfo {year} {2012})}\BibitemShut {NoStop}%
\bibitem [{\citenamefont {Howard}\ \emph {et~al.}(2014)\citenamefont {Howard},
  \citenamefont {Wallman}, \citenamefont {Veitch},\ and\ \citenamefont
  {Emerson}}]{howard2014contextuality}%
  \BibitemOpen
  \bibfield  {author} {\bibinfo {author} {\bibfnamefont {M.}~\bibnamefont
  {Howard}}, \bibinfo {author} {\bibfnamefont {J.}~\bibnamefont {Wallman}},
  \bibinfo {author} {\bibfnamefont {V.}~\bibnamefont {Veitch}}, \ and\ \bibinfo
  {author} {\bibfnamefont {J.}~\bibnamefont {Emerson}},\ }\href {\doibase
  10.1038/nature13460} {\bibfield  {journal} {\bibinfo  {journal} {Nature}\
  }\textbf {\bibinfo {volume} {510}},\ \bibinfo {pages} {351} (\bibinfo {year}
  {2014})}\BibitemShut {NoStop}%
\bibitem [{\citenamefont {Veitch}\ \emph {et~al.}(2014)\citenamefont {Veitch},
  \citenamefont {Mousavian}, \citenamefont {Gottesman},\ and\ \citenamefont
  {Emerson}}]{veitch2014resource}%
  \BibitemOpen
  \bibfield  {author} {\bibinfo {author} {\bibfnamefont {V.}~\bibnamefont
  {Veitch}}, \bibinfo {author} {\bibfnamefont {S.~H.}\ \bibnamefont
  {Mousavian}}, \bibinfo {author} {\bibfnamefont {D.}~\bibnamefont
  {Gottesman}}, \ and\ \bibinfo {author} {\bibfnamefont {J.}~\bibnamefont
  {Emerson}},\ }\href {\doibase 10.1088/1367-2630/16/1/013009} {\bibfield
  {journal} {\bibinfo  {journal} {New J. Phys.}\ }\textbf {\bibinfo {volume}
  {16}},\ \bibinfo {pages} {013009} (\bibinfo {year} {2014})}\BibitemShut
  {NoStop}%
\bibitem [{\citenamefont {Delfosse}\ \emph {et~al.}(2015)\citenamefont
  {Delfosse}, \citenamefont {Allard~Guerin}, \citenamefont {Bian},\ and\
  \citenamefont {Raussendorf}}]{delfosse2015wigner}%
  \BibitemOpen
  \bibfield  {author} {\bibinfo {author} {\bibfnamefont {N.}~\bibnamefont
  {Delfosse}}, \bibinfo {author} {\bibfnamefont {P.}~\bibnamefont
  {Allard~Guerin}}, \bibinfo {author} {\bibfnamefont {J.}~\bibnamefont {Bian}},
  \ and\ \bibinfo {author} {\bibfnamefont {R.}~\bibnamefont {Raussendorf}},\
  }\href {\doibase 10.1103/PhysRevX.5.021003} {\bibfield  {journal} {\bibinfo
  {journal} {Phys. Rev.~X}\ }\textbf {\bibinfo {volume} {5}},\ \bibinfo {pages}
  {021003} (\bibinfo {year} {2015})}\BibitemShut {NoStop}%
\bibitem [{\citenamefont {Bermejo-Vega}\ \emph {et~al.}(2017)\citenamefont
  {Bermejo-Vega}, \citenamefont {Delfosse}, \citenamefont {Browne},
  \citenamefont {Okay},\ and\ \citenamefont
  {Raussendorf}}]{bermejo2016contextuality}%
  \BibitemOpen
  \bibfield  {author} {\bibinfo {author} {\bibfnamefont {J.}~\bibnamefont
  {Bermejo-Vega}}, \bibinfo {author} {\bibfnamefont {N.}~\bibnamefont
  {Delfosse}}, \bibinfo {author} {\bibfnamefont {D.~E.}\ \bibnamefont
  {Browne}}, \bibinfo {author} {\bibfnamefont {C.}~\bibnamefont {Okay}}, \ and\
  \bibinfo {author} {\bibfnamefont {R.}~\bibnamefont {Raussendorf}},\ }\href
  {\doibase 10.1103/PhysRevLett.119.120505} {\bibfield  {journal} {\bibinfo
  {journal} {Phys. Rev. Lett.}\ }\textbf {\bibinfo {volume} {119}},\ \bibinfo
  {pages} {120505} (\bibinfo {year} {2017})}\BibitemShut {NoStop}%
\bibitem [{\citenamefont {Mermin}(1993)}]{mermin1993hidden}%
  \BibitemOpen
  \bibfield  {author} {\bibinfo {author} {\bibfnamefont {N.~D.}\ \bibnamefont
  {Mermin}},\ }\href {\doibase 10.1103/RevModPhys.65.803} {\bibfield  {journal}
  {\bibinfo  {journal} {Rev. Mod. Phys.}\ }\textbf {\bibinfo {volume} {65}},\
  \bibinfo {pages} {803} (\bibinfo {year} {1993})}\BibitemShut {NoStop}%
\bibitem [{\citenamefont {Brunner}\ \emph {et~al.}(2014)\citenamefont
  {Brunner}, \citenamefont {Cavalcanti}, \citenamefont {Pironio}, \citenamefont
  {Scarani},\ and\ \citenamefont {Wehner}}]{brunner2014bell}%
  \BibitemOpen
  \bibfield  {author} {\bibinfo {author} {\bibfnamefont {N.}~\bibnamefont
  {Brunner}}, \bibinfo {author} {\bibfnamefont {D.}~\bibnamefont {Cavalcanti}},
  \bibinfo {author} {\bibfnamefont {S.}~\bibnamefont {Pironio}}, \bibinfo
  {author} {\bibfnamefont {V.}~\bibnamefont {Scarani}}, \ and\ \bibinfo
  {author} {\bibfnamefont {S.}~\bibnamefont {Wehner}},\ }\href {\doibase
  10.1103/RevModPhys.86.419} {\bibfield  {journal} {\bibinfo  {journal} {Rev.
  Mod. Phys.}\ }\textbf {\bibinfo {volume} {86}},\ \bibinfo {pages} {419}
  (\bibinfo {year} {2014})}\BibitemShut {NoStop}%
\bibitem [{\citenamefont {Bell}(1966)}]{bell1966problem}%
  \BibitemOpen
  \bibfield  {author} {\bibinfo {author} {\bibfnamefont {J.~S.}\ \bibnamefont
  {Bell}},\ }\href {\doibase 10.1103/RevModPhys.38.447} {\bibfield  {journal}
  {\bibinfo  {journal} {Rev. Mod. Phys.}\ }\textbf {\bibinfo {volume} {38}},\
  \bibinfo {pages} {447} (\bibinfo {year} {1966})}\BibitemShut {NoStop}%
\bibitem [{\citenamefont {Oszmaniec}(2017)}]{oszmaniec2017private}%
  \BibitemOpen
  \bibfield  {author} {\bibinfo {author} {\bibfnamefont {M.}~\bibnamefont
  {Oszmaniec}},\ }\href@noop {} {\bibfield  {journal} {\bibinfo  {journal}
  {Private communication}\ } (\bibinfo {year} {2017})}\BibitemShut {NoStop}%
\bibitem [{\citenamefont {Beltrametti}\ and\ \citenamefont
  {Bugajski}(1995)}]{beltrametti1995classical}%
  \BibitemOpen
  \bibfield  {author} {\bibinfo {author} {\bibfnamefont {E.}~\bibnamefont
  {Beltrametti}}\ and\ \bibinfo {author} {\bibfnamefont {S.}~\bibnamefont
  {Bugajski}},\ }\href {\doibase 10.1088/0305-4470/28/12/007} {\bibfield
  {journal} {\bibinfo  {journal} {J. Phys. A}\ }\textbf {\bibinfo {volume}
  {28}},\ \bibinfo {pages} {3329} (\bibinfo {year} {1995})}\BibitemShut
  {NoStop}%
\bibitem [{\citenamefont {Hall}(2004)}]{hall2004prior}%
  \BibitemOpen
  \bibfield  {author} {\bibinfo {author} {\bibfnamefont {J.~W.}\ \bibnamefont
  {Hall}},\ }\href {\doibase 10.1103/PhysRevA.69.052113} {\bibfield  {journal}
  {\bibinfo  {journal} {Phys. Rev. A}\ }\textbf {\bibinfo {volume} {69}},\
  \bibinfo {pages} {052113} (\bibinfo {year} {2004})}\BibitemShut {NoStop}%
\bibitem [{\citenamefont {Hosoya}\ and\ \citenamefont
  {Shikano}(2010)}]{hosoya2010strange}%
  \BibitemOpen
  \bibfield  {author} {\bibinfo {author} {\bibfnamefont {A.}~\bibnamefont
  {Hosoya}}\ and\ \bibinfo {author} {\bibfnamefont {Y.}~\bibnamefont
  {Shikano}},\ }\href {\doibase 10.1088/1751-8113/43/38/385307} {\bibfield
  {journal} {\bibinfo  {journal} {Journal of Physics A: Mathematical and
  Theoretical}\ }\textbf {\bibinfo {volume} {43}},\ \bibinfo {pages} {385307}
  (\bibinfo {year} {2010})}\BibitemShut {NoStop}%
\bibitem [{\citenamefont {Pusey}\ and\ \citenamefont
  {Leifer}(2015)}]{pusey2015logical}%
  \BibitemOpen
  \bibfield  {author} {\bibinfo {author} {\bibfnamefont {M.~F.}\ \bibnamefont
  {Pusey}}\ and\ \bibinfo {author} {\bibfnamefont {M.~S.}\ \bibnamefont
  {Leifer}},\ }\href {\doibase 10.4204/EPTCS.195.22} {\bibfield  {journal}
  {\bibinfo  {journal} {arXiv:1506.07850}\ } (\bibinfo {year} {2015}),\
  10.4204/EPTCS.195.22}\BibitemShut {NoStop}%
\end{thebibliography}%

\onecolumngrid

\section{Supplemental Material}

\subsection{Ontological models and genuine non-classicality}
\label{appendix:ontological}

The working definition we used of the notion of  ``genuinely non-classical'' in the main text is ``a phenomenon that cannot be reproduced within any non-contextual local ontological model''. We now review a way in which these concepts can be formalised. See also Refs.~\cite{mermin1993hidden, spekkens2005contextuality,ferrie2011quasi, spekkens2014status, leifer2014quantum, brunner2014bell, kunjwal2015kochen} for extended discussions.

\subsubsection{Operational theory}

First of all, one has the statistics collected from experiments. 
Formally this defines an \emph{operational theory}, a set of operational notions of preparation procedures $P$ and measurement procedures $M$, together with a function
\begin{equation}
	(P,M) \longmapsto p(k|P,M),
\end{equation}
associating to each couple $(P,M)$ the statistics generated by the measurement $M$ on the preparation $P$. $k$ labels the outcomes of $M$.

The operational theory is also provided with notions of convex combinations $\sum_i q_i P^i$ and $\sum_i q_i M^i$, that correspond to choosing the preparation $P_i$ or the measurement $M_i$ from the outcome of a classical random variable distributed according to $\{q_i\}$; moreover, a notion of coarse-graining of a measurement $M$ is introduced. If $K$ is the set of outcomes of $M$, we can partition $K$ into sets $K_1$,..,$K_n$, obtaining a new measurement $\widetilde{M}$ with outcomes $K_1$,...,$K_n$. Operationally, $\widetilde{M}$ is realised by performing $M$ and recording outcome $j$ whenever $k \in K_j$.

\subsubsection{Ontological model}
\label{sec:ontologicalmodel}

Secondly, one looks for what we called in the main text a ``mechanism'' reproducing the statistics. This has been formalised in the literature through the notion of \emph{ontological model} \cite{spekkens2005contextuality, leifer2014quantum}, and much before that by the notion of \emph{hidden variable model} \cite{bell1966problem}. Given an operational theory, an ontological model poses the existence of a set of physical states $\lambda$ in some measure space $\Lambda$ such that
\begin{enumerate}
	\item For every preparation $P$ there exists a probability $p(\lambda|P)$ over $\Lambda$. This models the fact that every time one follows the preparation procedure $P$, state $\lambda$ is prepared with probability $p(\lambda|P)$.
	\item For every measurement $M$ there exists a probability $p(k|\lambda,M)$, with $p(k|\lambda,M) \in [0,1]$ and $\sum_k p(k|\lambda,M) = 1$ for every $\lambda \in \Lambda$. This models the response function of the measurement $M$, i.e. the probability that the measurement procedure $M$ returns outcome $k$, given that the physical state is $\lambda$.
\end{enumerate}

Finally, the ontological model is required to be compatible with the above notions of convex combinations and coarse-graining, i.e. \cite{oszmaniec2017private}
\begin{equation}
	\label{eq:ontologicalconvexity}
	\sum_i q_i P^i \mapsto \sum_i q_i p(\lambda|P^i), \quad   \sum_i q_i M^i \mapsto \sum_i q_i p(k|\lambda,M^i),
\end{equation}
\begin{equation}
	\label{eq:ontologicalcoarsegraining}
	p(K_j|\lambda,\widetilde{M}) =  \sum_{k \in K_j} p(k|\lambda,M).
\end{equation}

\subsubsection{Non-locality and contextuality}
The following discussion is based upon Refs.~\cite{spekkens2005contextuality, spekkens2014status}.

One may be tempted to define non-classical any phenomenon that cannot be captured within any ontological model or ``mechanism'', but this idea needs to be sharpened. In fact, without further restrictions, \emph{any} quantum statistics can be reproduced~\cite{spekkens2005contextuality}. Consider the Beltrametti-Bugajski model, in which $\Lambda$ is given by the set of projectors $\Psi$ \cite{beltrametti1995classical}. The model associates to the preparation of a pure state $\Psi$ the delta function
\begin{equation}
	p(\lambda|\Psi) = \delta (\lambda- \Psi).
\end{equation}
Moreover, every POVM $\{M_k\}$ is associated to the response function
\begin{equation}
	p(k|\lambda, \{M_k\}) = \tr{}{M_k \ketbra{\lambda}{\lambda} }.
\end{equation}
Using the convexity assumption, it is simple to show that this model reproduces the Born rule for every preparation and measurement. One then needs to specialise the notion of ``classical mechanism'' in the main text with some extra restrictions on the ontological model. We briefly summarise here locality and contextuality. 

Famously, in the case of two space-like separated measurements $\{M_k\}$ and  $\{N_j\}$, one wishes to make a \emph{locality} assumption. This requires that, once $\lambda$ is given, the response function should factorise, namely
\begin{equation*}
	p(k,j|\lambda, \{M_k \otimes N_j\}) =  p(k|\lambda, \{M_k\})p(j|\lambda, \{N_j\}).
\end{equation*}
Under the assumption that the measurements can be chosen independently of $\lambda$ and of each other, the impossibility of any such local ontological model to reproduce the quantum predictions is called \emph{Bell non-locality} \cite{brunner2014bell}, a standard notion of non-classicality.

Non-contextuality, defined in Sec.~B of the main text assuming the operational theory provided by quantum mechanics, can be viewed in much the same way. It is an elementary assumption, holding true in classical theory, that cannot be maintained in any ontological model reproducing the quantum predictions. The generalisation of the definitions given in the main text reads as follows. An ontological model is said to be \emph{universally non-contextual} if any two operationally indistinguishable preparations (and measurements) are represented in the same way in the hidden variable theory \footnote{We do not delve here into the notion of transformation non-contextuality. For an extended discussion, see Ref.~\cite{spekkens2005contextuality}}. Specifically, \emph{preparation non-contextuality} is defined as follows: if $p(k|P,M) = p(k|P',M)$ for every $M$ and $k$ ($P$ and $P'$ are operationally indistinguishable), then $p(\lambda|P)= p(\lambda| P')$ for all $\lambda$ ($P$ and $P'$ are the same preparation in terms of $\lambda$'s). \emph{Measurement non-contextuality} is defined in the same way, inverting the roles of $P$ and $M$:  if $p(k|P,M) = p(k|P,M')$ for every $P$ and $k$ ($M$ and $M'$ are operationally indistinguishable), then $p(k|\lambda,M)= p(k|\lambda, M')$ for all $k$, $\lambda$. The failure of any non-contextual ontological model to explain the observed statistics is termed \emph{contextuality}~\cite{spekkens2005contextuality}. We note in passing that the Beltrametti-Bugajski model could be ruled out as a classical mechanism on the grounds that it violates preparation non-contextuality. 

One can make a comparison with the locality assumption~\cite{spekkens2005contextuality}: 
\begin{enumerate}
	\item We are given an operational notion of no-signalling: no experiment ever managed to signal faster than light.
	\item The natural explanation of this fact is that the hidden variables themselves are not signalling.
\end{enumerate}

Finally, we comment on the connection between the definitions of contextuality given in the main text and Kochen-Specker contextuality. 
Consider the assumption of \emph{outcome determinism}: sharp measurements $M$ are associated to indicator functions, i.e., $p(k|\lambda, M) \in \{0,1\}$ for all $k$ and $\lambda \in \Lambda$. Under this assumption, a measurement non-contextual model is non-contextual in the sense of Kochen and Specker \cite{kochen1975problem, kunjwal2015kochen}. In fact, take the operational theory to be given by quantum theory. Using the defining properties of ontological models one can show that measurement non-contextuality implies $p(k|\lambda, M) =p(k|\lambda, M_k)$, where $M_k$ is the POVM element associated to outcome $k$ of measurement $M$  \cite{spekkens2014status}. If $M$ is associated to a hermitian operator, $M_k$ is just an element of the basis of eigenvectors of $M$ and, from outcome determinism, $p(k|\lambda, M_k) \in \{0,1\}$. We could now consider the joint measurement of two commuting observables $A$ and $B$ or two commuting observables $A$ and $C$. Then, given $\lambda$, the above reasoning shows that a $\{0,1\}$ assignment must be made for every projector of $A$, independently of the choice of the other elements of the basis, i.e. independently of the fact that $A$ is measured jointly with $B$ or with $C$ (the ``context''). This is, however,  Kochen-Specker non-contextuality.

\subsection{General protocol for strong and weak measurements of work}
\label{appendix:weak}

We discuss here in more detail the family of protocols described in the main text and the two limits $s \rightarrow 0$ and $s \rightarrow \infty$ (see Fig.~\ref{fig:weakscheme}). We focus on the case in which for every $w$ there is a unique couple of indexes $i$ and $j$ such that $w = E'_j - E_i$. 

\begin{figure}
	\centering
	\includegraphics[width=0.4\linewidth]{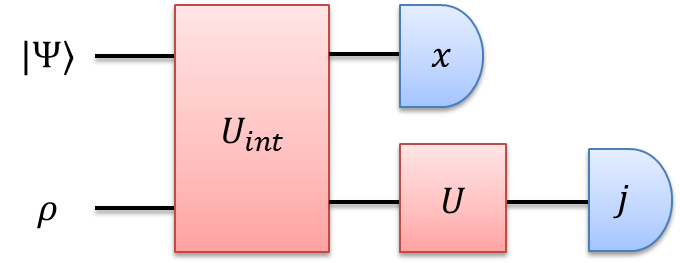}
	\caption{A one-parameter family of protocols for the measurement of work in a system initially prepared in state $\rho$ and undergoing a unitary evolution $U$. The protocols are characterised by the preparation of the pointer state $\Psi$, a Gaussian state with width $s$. The pointer interacts through $U_{\rm int}$ with the system and it is then projectively measured in the position basis, returning outcome $x$. The system, on the other hand, after interacting with the pointer evolves according to the driving unitary $U$ and its final energy is projectively measured, returning outcome $j$. As we describe in detail, in the limit $s \rightarrow 0$ one recovers the two-point measurement distribution, whereas for $s \rightarrow \infty$ we obtain a weak measurement scheme.}
	\label{fig:weakscheme}
\end{figure}

The initial state of system and measurement device is $\rho \otimes \Psi$, where $\Psi = \ketbra{\Psi}{\Psi}$ and $\ket{\Psi}$ is given by
\begin{equation}
	\label{eq:ancilla}
	\ket{\Psi} = \int dx G_{s}(x) \ket{x}, \quad G_{s}(x)= (\pi s^2)^{-1/4} \exp \left[-x^2/(2s^2)\right].
\end{equation}
The unitary interaction induced by $H_{\rm int}$, coupling system and pointer, reads
\begin{equation}
	\label{eq:unitaryinteraction}
	U_{\rm int} := e^{-i \mathcal{E}_i \otimes P} = \mathcal{E}_i \otimes e^{-i P} + \mathcal{E}^\perp_i \otimes \iden,
\end{equation}
where $\mathcal{E}^\perp_i$ is the projector on the subspace orthogonal to $\mathcal{E}_i = \ketbra{i}{i}$. Let us introduce the notation
\begin{equation*}
	\rho^{11} = \mathcal{E}_i \rho \mathcal{E}_i,  \; \rho^{01} = \mathcal{E}^\perp_i \rho \mathcal{E}_i, \; \rho^{10} = \mathcal{E}_i \rho \mathcal{E}^\perp_i, \; \rho^{00} = \mathcal{E}^\perp_i \rho \mathcal{E}^\perp_i.
\end{equation*}
The state of system and measurement device after the interaction~is
\begin{small}
	\begin{equation*}
		U_{\rm int} \rho \otimes \Psi U^\dag_{\rm int} = \sum_{k, k' =0}^1 \rho^{kk'} \otimes \int_{-\infty}^{+\infty} dx dy G_{s}(x -k) G_{s}(y-k') \ketbra{x}{y}.
	\end{equation*}
\end{small}
We perform a projective measurement $\{\ketbra{x}{x}\}$ on the device. The next steps are the unitary driving $U$ on the system, followed by an energy measurement with respect to the final Hamiltonian. The expectation value of the device position, upon postselecting outcome $j$ in the final energy measurement, is denoted by $\langle X \rangle_j$. It reads
\begin{equation}
	\label{eq:Xj}
	\begin{split}
		\langle X \rangle_j &= \tr{}{\left(\ketbra{\tilde{j}}{\tilde{j}} \otimes X\right) U_{\rm int} \rho \otimes \Psi U^\dag_{\rm int}}/q_j \\
		&:= \tr{}{ X \sigma_j}/q_j,
	\end{split}
\end{equation}
where $\ket{\tilde{j}} = U^\dag \ket{j}$, \mbox{$\sigma_j := \bra{\tilde{j}} U_{\rm int} \rho \otimes \Psi U^\dag_{\rm int} \ket{\tilde{j}}$} and $q_j$ is the probability of observing outcome $j$ in the final energy measurement. We have
\begin{equation*}
	\sigma_j = \sum_{k,k'=0}^1 \bra{\tilde{j}}\rho^{kk'}\ket{\tilde{j}}\int_{-\infty}^{+\infty} dx dy G_{s}(x -k) G_{s}(y-k') \ketbra{x}{y}.
\end{equation*}
The probability $q_j= \tr{}{\sigma_j}$ is given by
\begin{eqnarray*}
	q_j &= &\bra{\tilde{j}}\rho^{00}\ket{\tilde{j}} + \bra{\tilde{j}}\rho^{11}\ket{\tilde{j}} + 2e^{-1/(4s^2)} \Re \bra{\tilde{j}}\rho^{10} \ket{\tilde{j}} \\
	&=& \bra{\tilde{j}}\rho\ket{\tilde{j}} - 2(1-e^{-1/(4s^2)}) \Re \bra{\tilde{j}}\rho^{10} \ket{\tilde{j}},
\end{eqnarray*}
where for the second line we used 
\begin{equation*}
	\bra{\tilde{j}}\rho\ket{\tilde{j}} = \bra{\tilde{j}}\rho^{00}\ket{\tilde{j}} + \bra{\tilde{j}}\rho^{11}\ket{\tilde{j}} + 2 \Re \bra{\tilde{j}}\rho^{10}\ket{\tilde{j}}.
\end{equation*}
Moreover,
\begin{equation*}
	\tr{}{X \sigma_j} = \bra{\tilde{j}}\rho^{11}\ket{\tilde{j}} + e^{-1/(4s^2)} \Re \bra{\tilde{j}}\rho^{10} \ket{\tilde{j}}.
\end{equation*}
From Eq.~\eqref{eq:Xj}, one has the final expression
\begin{equation}
	\label{eq:conditionalexpectation}
	\langle X \rangle _j = \frac{\bra{\tilde{j}}\rho^{11}\ket{\tilde{j}} + e^{-1/(4s^2)} \Re \bra{\tilde{j}}\rho^{10} \ket{\tilde{j}}}{\bra{\tilde{j}}\rho\ket{\tilde{j}} - 2(1-e^{-1/(4s^2)}) \Re \bra{\tilde{j}}\rho^{10} \ket{\tilde{j}}}.
\end{equation}
We can now study the two relevant limits of Eq.~\eqref{eq:conditionalexpectation}:
\begin{enumerate}
	\item When $s \rightarrow 0$, also called the \emph{strong measurement} limit, one has 
	\begin{equation}
		\langle X \rangle _j \rightarrow \frac{\bra{\tilde{j}}\rho^{11}\ket{\tilde{j}}}{q_j} = \frac{p_{\rm tpm}(w|\mathcal{P})}{q_j}.
	\end{equation}
	\item Consider now the limit $s \rightarrow \infty$, also called the \emph{weak measurement} limit. Using $\mathcal{E}_i + \tilde{\mathcal{E}}_i = \mathbb{1}$ and defining $\Pi_j = \ketbra{\tilde{j}}{\tilde{j}}$,
	\begin{equation}
		\label{eq:conditionalexpectationweak}
		\langle X \rangle _j \rightarrow \frac{\bra{\tilde{j}}\rho^{11}\ket{\tilde{j}} + \Re \bra{\tilde{j}}\rho^{10}\ket{\tilde{j}} }{\bra{\tilde{j}}\rho\ket{\tilde{j}}} = \frac{\Re \tr{}{\rho \mathcal{E}_i \Pi_j}}{\tr{}{\rho \Pi_j}} = \frac{p_{\rm weak}(w|\mathcal{P})}{q_j}.
	\end{equation}
\end{enumerate}
The theoretical meaning of the quantity $p_{\rm weak}(w|\mathcal{P})$ can also be understood from its connection to an inference procedure in the presence of non-commuting observables \cite{hall2004prior}. 

The quantity in Eq.~\eqref{eq:conditionalexpectationweak} can also be identified with the real part of the (generalised) \emph{weak value} of $\mathcal{E}_i$, with pre-selection $\rho$ and post-selection $\Pi_j$. For an introduction, see \cite{dressel2014colloquium, aharonov1988result, dressel2010contextual, hosoya2010strange}. A weak value is called anomalous when its value lies outside the spectrum of the observable. This can happen in two ways: either the weak value has non-zero imaginary component, or its real part lies outside the spectrum of the observable (or both). Here we focus on the real part of the weak value, because 1. The quasi-probability of work introduced by Allahverdyan in Ref.~\cite{allahverdyan2014nonequilibrium} is proportional to the real part of the weak value and 2. Proofs of contextuality rely on an anomaly in the real part (see Ref.~\cite{pusey2014anomalous, pusey2015logical}), while a non-zero imaginary part of the weak values can be reproduced within non-contextual models~\cite{karanjai2015weak}. Hence, for our purposes we will follow Ref.~\cite{pusey2014anomalous} and call \emph{anomalous} a weak value with anomalous real part. Since $\sum_i \Re \tr{}{\rho \mathcal{E}_i \Pi_j}/\tr{}{\rho \Pi_j} =1$, it follows without loss of generality that if one of the $\mathcal{E}_i$ is anomalous in its real part, there will be some $\mathcal{E}_j$ such that the weak value has negative real part. For this reason we can focus on the question of what are the consequences of a negative value in $p_{\rm weak}(w|\mathcal{P})$. 

\subsection{Some subtleties concerning Theorem~1}

\begin{enumerate}
	\item \emph{What is the connection between the present study and Ref.~\cite{spekkens2008negativity}?} 
	
	We first need to recall what precisely is the claim of Ref.~\cite{spekkens2008negativity}. There it is shown that given a prepare and measure scenario, where one prepares a set of quantum states $\rho^j$ and picks measurements from a set of POVMs $\{M^l_k\}$ with outcomes $k$, the existence of a universally non-contextual model for the ensuing statistics $\tr{}{\rho^j M^l_k}$ is equivalent to the existence of a non-negative quasi-probability representation (a positive representation, for short) for the relevant preparations and measurements. A positive representation is given by a measure space of ontological states $\lambda \in \Lambda$ and the association of $\rho^j$, $M^l_k$ to normalised, non-negative functions on $\Lambda$:
	\begin{equation}
		\label{eq:representations}
		\rho^j \mapsto p(\lambda|\rho^j), \quad M^l_k \mapsto p(k|\lambda, M^l_k).
	\end{equation} 
	Here the functions $p(\lambda|\rho^j)$, $p(k|\lambda, M^l_k)$ are convex respectively in $\rho_j$ and $M^l_k$ and they need to satisfy
	\begin{equation*}
		\tr{}{\rho^j M^l_k}= \int d\lambda p(\lambda|\rho^j) p(k|\lambda, M^l_k).
	\end{equation*}
	The existence of such positive representation is then seen to be equivalent (by definition) to universal non-contextuality. 
	
	The conclusion one can draw is that the negativity of $p(w|\mathcal{P})$ is a very different concept compared to the negativity of quasi-probability representations. As the examples $p(w|\mathcal{P}) = \tr{}{\rho M_w}$ and $p(w|\mathcal{P}) = \Re \tr{}{\rho \mathcal{E}_i \Pi_j}$ from the main text show, when we define $p(w|\mathcal{P})$ we are not even associating functions separately to states and measurements. In other words, $p(w|\mathcal{P})$ does not define a quasi-probability \emph{representation}, positive or otherwise. Hence, the negativity of $p(w|\mathcal{P})$ is not negativity of the functions $p(\lambda|\rho^j)$
	and $p(k|\lambda,M_k^l)$ over ontic states, and $p(w|\mathcal{P})$ is not required to reproduce the observed statistics by averaging over some $\lambda$. 
	
	In short, one should carefully distinguish various notions of negativity used in the literature. Showing that in certain cases the negativity of some quasi-probability implies the negativity of every quasi-probability \emph{representation} of the protocol $\mathcal{P}$ is a highly non-trivial task, first accomplished by Pusey in the paper \cite{pusey2014anomalous}.

	\item \emph{Can one derive Theorem~1 independently of the assumption that $\mathcal{P}$ is a FT protocol?} One can notice that assumption 1 by itself simply implies that $p(w|\mathcal{P})$ is associated to some POVM ${M_w}$, so that from Eq.~(5) in the main text
	\begin{equation}
		\tr{}{\rho M_w} = \int_\Lambda d\lambda p(\lambda|\rho) p(w|\lambda,M_w(\mathcal{P})).
	\end{equation}
	The question is then if the statistics on the left-hand side can be reproduced by a non-contextual ontological model for a given set of preparations $\rho$ and measurements $M_w$. Without the assumption that $M_w$ reproduces the TPM scheme for classical states, $M_w$ is a completely arbitrary set of measurements. Hence, the answer to the above question is negative, since there are measure and prepare schemes exhibiting contextuality~\cite{spekkens2005contextuality}. This shows that the possibility of constructing a non-contextual model is granted by the assumption of recovering the TPM scheme for classical states together with assumption~1. Assumption~1 alone is insufficient. 
\end{enumerate}  

\subsection{Proof of Theorem~2}

With reference to the protocol introduced in Sec.~D of the main text, and dropping for simplicity unnecessary indexes $i$ and $j$, we report here Theorem~2:
\begin{thm}
	\label{thm:contextuality}
	Let $\rho$ be a quantum state, $\mathcal{E}$ and $\Pi$ projectors. If $\Re \tr{}{\rho \mathcal{E} \Pi} <0$, for $s$ large enough there is no measurement non-contextual ontological model for preparation $\rho$, measurement $\{M^s_x\}_{x \in \R}$ and post-selection $\Pi$ that satisfies outcome determinism.     
\end{thm}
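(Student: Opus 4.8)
The plan is to follow the strategy of Pusey~\cite{pusey2014anomalous}, adapted to a general mixed preparation. First I would package the whole protocol of Sec.~D into a single measurement procedure on the system with outcomes $(x,j)$: couple to the pointer of width $s$, read the position $x$, evolve by $U$ and project onto $\ketbra{j'}{j'}$. Tracing out the pointer (initially in $\ket{\Psi}$), outcome $(x,j)$ is associated to the effect $E^s_{x,j}=K_x \Pi K_x$, with $K_x=G_s(x-1)\mathcal{E}+G_s(x)\mathcal{E}^\perp=G_s(x)\,\iden+(G_s(x-1)-G_s(x))\,\mathcal{E}$, so that $q_j\langle X\rangle_j=\int dx\, x\,\tr{}{\rho E^s_{x,j}}$. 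Expanding $K_x\Pi K_x$ and using the Gaussian moments $\int dx\, x\, G_s(x)^2=0$, $\int dx\, x\, G_s(x)G_s(x-1)=\tfrac12 e^{-1/(4s^2)}$ and $\int dx\, x\, G_s(x-1)^2=1$ reproduces Eq.~\eqref{eq:conditionalexpectation} and confirms $q_j\langle X\rangle_j\to\Re\tr{}{\rho\mathcal{E}\Pi}$ as $s\to\infty$.

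Next I would argue on the hidden-variable side. Suppose, for contradiction, that a measurement non-contextual ontological model with outcome determinism reproduces the statistics of $\rho$, of the sharp measurements $\{\mathcal{E},\mathcal{E}^\perp\}$ and $\{\Pi,\Pi^\perp\}$, and of the composite procedures $\{E^s_{x,j}\}$ (together with their convex combinations and coarse-grainings). Building on Supplemental Material~A, measurement non-contextuality together with the convexity/coarse-graining structure forces the response function of any effect $E$ to be convex-linear in $E$; extending affinely, for each $\lambda$ there is a Hermitian operator $\mu_\lambda$ with $\tr{}{\mu_\lambda}=1$ such that $p(k|\lambda,M)=\tr{}{\mu_\lambda M_k}$ and $\tr{}{\mu_\lambda E}\in[0,1]$ on every effect of the model. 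Outcome determinism applied to $\{\mathcal{E},\mathcal{E}^\perp\}$ then gives $\tr{}{\mu_\lambda\mathcal{E}}\in\{0,1\}$. (Restricting to this specific fragment of quantum theory is essential: the full set of quantum measurements admits no measurement non-contextual, outcome deterministic model at all, so the content of the theorem is that this small fragment already fails precisely when $\Re\tr{}{\rho\mathcal{E}\Pi}<0$.)

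I would then compute the ``ontic conditional pointer shift'' $V_j(\lambda):=\int dx\, x\,\tr{}{\mu_\lambda E^s_{x,j}}$. The same Gaussian moments give $V_j(\lambda)=e^{-1/(4s^2)}\,\Re\tr{}{\mu_\lambda\mathcal{E}\Pi}+(1-e^{-1/(4s^2)})\,\tr{}{\mu_\lambda\mathcal{E}\Pi\mathcal{E}}$, and the crucial claim is $V_j(\lambda)\ge0$ for every $\lambda$. Writing $r=(G_s(x-1)-G_s(x))/G_s(x)$, which ranges over $(-1,\infty)$, the response $\tr{}{\mu_\lambda E^s_{x,j}}$ equals $G_s(x)^2$ times the quadratic $\tr{}{\mu_\lambda\Pi}+2r\,\Re\tr{}{\mu_\lambda\mathcal{E}\Pi}+r^2\,\tr{}{\mu_\lambda\mathcal{E}\Pi\mathcal{E}}$; since a response function is a probability, non-negativity of this quadratic on $(-1,\infty)$ forces $\tr{}{\mu_\lambda\mathcal{E}\Pi\mathcal{E}}\ge0$, and the analogous statement for outcome ``no'' (which gives $\tr{}{\mu_\lambda\mathcal{E}\Pi^\perp\mathcal{E}}\ge0$) yields $\tr{}{\mu_\lambda\mathcal{E}\Pi\mathcal{E}}\le\tr{}{\mu_\lambda\mathcal{E}}$. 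A short case split on $\tr{}{\mu_\lambda\mathcal{E}}\in\{0,1\}$ then also delivers $\Re\tr{}{\mu_\lambda\mathcal{E}\Pi}\ge0$: when $\tr{}{\mu_\lambda\mathcal{E}}=0$ one gets $\tr{}{\mu_\lambda\mathcal{E}\Pi\mathcal{E}}=0$, so the quadratic is linear and non-negativity on $(-1,\infty)$ forces its slope to be non-negative; when $\tr{}{\mu_\lambda\mathcal{E}}=1$ one additionally uses $\tr{}{\mu_\lambda E^s_{x,j}}\le\tr{}{\mu_\lambda M^s_x}=G_s(x-1)^2$ and compares the two quadratics in the limit $r\to-1$. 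Hence $V_j(\lambda)\ge0$, and integrating against the preparation gives $q_j\langle X\rangle_j=\int_\Lambda d\lambda\, p(\lambda|\rho)\,V_j(\lambda)\ge0$; but for $s$ large enough $q_j\langle X\rangle_j<0$, since it tends to $\Re\tr{}{\rho\mathcal{E}\Pi}<0$. This contradiction proves the claim.

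The main obstacle I anticipate is getting the ontological-model bookkeeping for the sequential step right: one must not factorise $p((x,j)|\lambda)$ across the intermediate weak measurement as though it were a sharp $\mathcal{E}$-measurement followed by classical post-processing, because the two procedures share a POVM but disturb the ontic state differently — and only the genuinely weak instrument $\{K_x\,\cdot\,K_x\}$ preserves the $\mathcal{E}/\mathcal{E}^\perp$ coherence that makes the limit $\Re\tr{}{\rho\mathcal{E}\Pi}$ rather than $\tr{}{\rho\mathcal{E}\Pi\mathcal{E}}$. Bundling everything into the single effect $E^s_{x,j}$ and applying measurement non-contextuality there is what sidesteps this. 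A secondary point demanding care is that in the restricted fragment $\mu_\lambda$ is only Hermitian, not positive, so the operator inequalities $\tr{}{\mu_\lambda\mathcal{E}\Pi\mathcal{E}}\ge0$ etc.\ must be extracted from the positivity and normalisation of the response functions themselves, not from $\mu_\lambda\ge0$.
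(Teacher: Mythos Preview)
Your proposal is correct but follows a genuinely different route from the paper's. The paper adapts Pusey's original \emph{half-line} argument: it integrates the post-selected pointer statistics over $x<0$ to form $p^s_-=\tfrac{1}{p_\Pi}\int_{-\infty}^{0}\tr{}{N^{s\dag}_x\Pi N^s_x\rho}\,dx$, invokes outcome determinism on $\Pi$ to partition $\Lambda$ into $\Lambda_0\cup\Lambda_1$, and combines the median property of the Gaussian with the decomposition $S^s=(1-p^s_d)\Pi+p^s_d E_d$ to obtain $p^s_-\le \tfrac12+p^s_d/p_\Pi$; an explicit asymptotic expansion then shows this contradicts $\Re\tr{}{\rho\mathcal{E}\Pi}<0$ for large $s$. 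You instead work with the \emph{first moment} $q_j\langle X\rangle_j=\int dx\,x\,\tr{}{\rho E^s_{x,j}}$, pass to the dual linear functional $\mu_\lambda$, and prove the pointwise sign constraint $V_j(\lambda)\ge 0$ by analysing the quadratic $r\mapsto \tr{}{\mu_\lambda(\iden+r\mathcal{E})\Pi(\iden+r\mathcal{E})}$ on $(-1,\infty)$, case-splitting on $\tr{}{\mu_\lambda\mathcal{E}}\in\{0,1\}$.

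What each approach buys: the paper's version is closer to an operational inequality (a bound on a directly observed conditional probability) and uses outcome determinism only on the explicitly listed post-selection $\{\Pi,\Pi^\perp\}$, at the cost of introducing the auxiliary effect $E_d=(\mathcal{E}-\tilde{\mathcal{E}})\Pi(\mathcal{E}-\tilde{\mathcal{E}})$ into the fragment. Your version ties the contradiction directly to the sign of the measured quantity $q_j\langle X\rangle_j$ and dispenses with $E_d$, but relies on outcome determinism for $\{\mathcal{E},\mathcal{E}^\perp\}$, a projective measurement that enters the fragment only implicitly through the convex decomposition $M^s_x=p^s(x-1)\mathcal{E}+p^s(x)\tilde{\mathcal{E}}$. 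That is a legitimate reading of ``satisfies outcome determinism'' (the paper defines it for all sharp measurements in the model, and its own proof already represents $\{\mathcal{E},\tilde{\mathcal{E}}\}$ ontologically), but it is worth stating explicitly that you are enlarging the list of sharp measurements to which determinism applies relative to what the theorem statement names.
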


We break down this theorem into two independent lemmas and use some of the notation of Ref.~\cite{pusey2014anomalous} when possible to ease comparisons. The first technical lemma shows that $\Re \tr{}{\rho \mathcal{E} \Pi} <0$ implies a generalised version of the assumptions of Theorem~1 in Ref.~\cite{pusey2014anomalous}:

\begin{lem}
	\label{lemma}
	Let $\rho$ be a quantum state and $\mathcal{E}$, $\Pi$ be projectors. Assume $\Re \tr{}{\rho \mathcal{E} \Pi} <0$. Then
	\begin{enumerate}
		\item \label{cond1} $p_{\Pi} := \tr{}{\Pi \rho}>0$. 
		\item \label{cond2} The family of POVMs $M^s_x= N^{s \dag}_x N^s_x$ satisfies \begin{enumerate}
			\item \label{cond2a} $M^s_x = p^s(x-1) \mathcal{E} + p^s(x) \tilde{\mathcal{E}}, \; \; \tilde{\mathcal{E}} = \mathbb{1} - \mathcal{E}, \; \;$ $p^s(x)$ probability distribution with median $x=0$.
			\item \label{cond2b} $S^s:= \int_{-\infty}^{+\infty} N^{s \dag}_x \Pi N^s_x dx = (1-p^s_d) \Pi + p^s_d E_d$, with $\{ E_d, \mathbb{1} - E_d \}$ POVM and $p^s_d \in [0,1/2]$.
			\item \label{cond2c} For $s$ large enough,  $p^s_-:= \frac{1}{p_\Pi} \int_{-\infty}^0 \tr{}{N^{s \dag}_x \Pi N^s_x \rho} dx > \frac{1}{2} + \frac{p^s_d}{p_\Pi}$. 
		\end{enumerate} 
	\end{enumerate}
\end{lem}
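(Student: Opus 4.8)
The plan is to verify items~\ref{cond1} and \ref{cond2} mostly by direct computation, isolating the only delicate point in the asymptotic estimate of item~\ref{cond2c}. First I would make the measurement $\{M^s_x\}$ explicit. Since $U_{\rm int}=\mathcal{E}\otimes e^{-iP}+\tilde{\mathcal{E}}\otimes\mathbb{1}$ and $e^{-iP}$ translates the pointer by one unit, the (self-adjoint, positive) Kraus operator attached to pointer outcome $x$ is $N^s_x=\bra{x}U_{\rm int}\ket{\Psi}=G_s(x-1)\mathcal{E}+G_s(x)\tilde{\mathcal{E}}$, so $M^s_x=(N^s_x)^2=G_s(x-1)^2\mathcal{E}+G_s(x)^2\tilde{\mathcal{E}}$. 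Setting $p^s(x):=G_s(x)^2=(\pi s^2)^{-1/2}e^{-x^2/s^2}$, a normalised Gaussian of mean (hence median) zero, gives item~\ref{cond2a} immediately. Item~\ref{cond1} follows because $p_\Pi=\tr{}{\Pi\rho}=\tr{}{\rho^{1/2}\Pi\rho^{1/2}}\ge0$, and $p_\Pi=0$ would force $\Pi\rho=0$, hence $\tr{}{\rho\mathcal{E}\Pi}=\tr{}{(\Pi\rho)\mathcal{E}}=0$, contradicting $\Re\tr{}{\rho\mathcal{E}\Pi}<0$.

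For item~\ref{cond2b} I would evaluate $S^s$ using the elementary Gaussian integrals $\int G_s(x-1)^2\,dx=\int G_s(x)^2\,dx=1$ and $\int G_s(x-1)G_s(x)\,dx=e^{-1/(4s^2)}=:c_s$ (the last by completing the square, $(x-1)^2+x^2=2(x-\tfrac12)^2+\tfrac12$). Expanding $N^s_x\Pi N^s_x$ into its $\mathcal{E}/\tilde{\mathcal{E}}$ blocks and integrating yields $S^s=\mathcal{E}\Pi\mathcal{E}+\tilde{\mathcal{E}}\Pi\tilde{\mathcal{E}}+c_s(\mathcal{E}\Pi\tilde{\mathcal{E}}+\tilde{\mathcal{E}}\Pi\mathcal{E})=\Pi-(1-c_s)(\mathcal{E}\Pi\tilde{\mathcal{E}}+\tilde{\mathcal{E}}\Pi\mathcal{E})$. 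Writing $p^s_d:=1-c_s$ and $E_d:=\mathcal{E}\Pi\mathcal{E}+\tilde{\mathcal{E}}\Pi\tilde{\mathcal{E}}$ (the dephasing of $\Pi$ in the $\{\mathcal{E},\tilde{\mathcal{E}}\}$ basis) gives $S^s=(1-p^s_d)\Pi+p^s_d E_d$; here $0\le E_d\le\mathbb{1}$ since $\mathcal{E}\Pi\mathcal{E},\tilde{\mathcal{E}}\Pi\tilde{\mathcal{E}}\ge0$ act on orthogonal subspaces and are dominated by $\mathcal{E},\tilde{\mathcal{E}}$ respectively, and $p^s_d=1-e^{-1/(4s^2)}\in[0,\tfrac12]$ once $s^2\ge1/(4\ln2)$.

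The core is item~\ref{cond2c}. Put $a:=\tr{}{\mathcal{E}\Pi\mathcal{E}\rho}\ge0$, $b:=\tr{}{\tilde{\mathcal{E}}\Pi\tilde{\mathcal{E}}\rho}\ge0$, $r:=\Re\tr{}{\tilde{\mathcal{E}}\Pi\mathcal{E}\rho}$; decomposing $\rho$ into $\mathcal{E}/\tilde{\mathcal{E}}$ blocks gives $p_\Pi=a+b+2r$ and $\Re\tr{}{\rho\mathcal{E}\Pi}=a+r$, so the hypothesis reads $a+r<0$. With the partial integrals $\int_{-\infty}^0G_s(x)^2\,dx=\tfrac12$, $\int_{-\infty}^0G_s(x-1)^2\,dx=\Phi(-\sqrt2/s)=:\alpha_s$ and $\int_{-\infty}^0G_s(x-1)G_s(x)\,dx=c_s\,\Phi(-1/(\sqrt2\,s))=:c_s\beta_s$ ($\Phi$ the standard normal CDF), one finds $\int_{-\infty}^0\tr{}{N^s_x\Pi N^s_x\rho}\,dx=a\alpha_s+2rc_s\beta_s+b/2$. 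Dividing by $p_\Pi>0$ and simplifying, the claim $p^s_->\tfrac12+p^s_d/p_\Pi$ becomes
\begin{equation*}
a\!\left(\alpha_s-\tfrac12\right)+r\!\left(2c_s\beta_s-1\right)>p^s_d.
\end{equation*}
Expanding in $\epsilon:=1/s^2\to0$, from $\Phi(-t)=\tfrac12-t/\sqrt{2\pi}+O(t^3)$ one gets $\alpha_s-\tfrac12=-\sqrt{\epsilon/\pi}+O(\epsilon^{3/2})$, $2c_s\beta_s-1=-\sqrt{\epsilon/\pi}+O(\epsilon)$ and $p^s_d=\epsilon/4+O(\epsilon^2)$, so the left-hand side equals $-(a+r)\sqrt{\epsilon/\pi}+O(\epsilon)=\bigl|\Re\tr{}{\rho\mathcal{E}\Pi}\bigr|\sqrt{\epsilon/\pi}+O(\epsilon)$, which is positive and of order $\sqrt\epsilon$, while the right-hand side is only of order $\epsilon$; hence the inequality holds for all $s$ large enough.

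The hard part is exactly this last balance: one must show that the ``signal'' $|\Re\tr{}{\rho\mathcal{E}\Pi}|$ survives at order $1/s$ while the measurement back-action $p^s_d$ vanishes at order $1/s^2$, so the former dominates as $s\to\infty$; pinning down the constants and remainders in the error-function-type expansions (with the $O(\cdot)$'s uniform in the fixed data $a,b,r$) is the only step that needs care. Everything else is block-matrix bookkeeping plus the Gaussian integrals above. With Lemma~\ref{lemma} established, Theorem~2 of the main text then follows by feeding items~\ref{cond1}--\ref{cond2} into a logical-contextuality argument in the style of Ref.~\cite{pusey2014anomalous}.
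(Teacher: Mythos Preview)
Your proof is correct and follows the same overall strategy as the paper: explicit Gaussian Kraus operators, direct integration for $S^s$, and an asymptotic expansion in $1/s$ for condition~\ref{cond2c}. The one substantive difference is your decomposition in item~\ref{cond2b}. You take $E_d=\mathcal{E}\Pi\mathcal{E}+\tilde{\mathcal{E}}\Pi\tilde{\mathcal{E}}$ (the dephasing of $\Pi$) with $p^s_d=1-e^{-1/(4s^2)}$, whereas the paper takes $E_d=(\mathcal{E}-\tilde{\mathcal{E}})\Pi(\mathcal{E}-\tilde{\mathcal{E}})$ with $p^s_d=\tfrac12(1-e^{-1/(4s^2)})$. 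Both are legitimate convex splittings of $S^s$; the paper's $E_d$ is in fact a projector (since $(\mathcal{E}-\tilde{\mathcal{E}})^2=\mathbb{1}$) and its $p^s_d$ lies in $[0,\tfrac12]$ for \emph{all} $s>0$, so item~\ref{cond2b} holds unconditionally there, while your choice needs the extra restriction $s^2\ge 1/(4\ln 2)$ you correctly flag. Since the lemma is only ever invoked for large $s$, this costs nothing, and your $p^s_d$ is still $O(1/s^2)$, so the balance in item~\ref{cond2c} is unaffected. Your argument for item~\ref{cond1} via $\Pi\rho=0$ is also a slight variant of the paper's Cauchy--Schwarz step, equally valid.
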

\begin{proof}
	From Cauchy-Schwarz,
	\begin{equation*}
		0<\left| \tr{}{\rho \mathcal{E} \Pi} \right|^2  \leq \tr{}{\Pi \rho} \tr{}{\rho \mathcal{E}}.
	\end{equation*}
	One must then have $\tr{}{\Pi \rho} > 0$. This shows that \ref{cond1} holds.
	
	As in Supplemental Material~\ref{appendix:weak}, let us define an ancillary state $\ket{\Psi}$ as in Eq.~\eqref{eq:ancilla} and a unitary interaction as in Eq.~\eqref{eq:unitaryinteraction}. By measuring $\{\ketbra{x}{x}\}$ on the ancilla, we define the POVM $M^s_x = N^{s \dag}_x N^s_x$ on the system. Since
	\begin{equation*}
		U_{\rm int} \ket{\Psi} = \mathcal{E} \otimes \int_{-\infty}^{+\infty} dx G_s(x)\ket{x+1} + \mathcal{\tilde{E}} \otimes  \int_{-\infty}^{+\infty} dx G_s(x)\ket{x} = \mathcal{E} \otimes \int_{-\infty}^{+\infty} dx G_s(x-1)\ket{x} + \mathcal{\tilde{E}} \otimes  \int_{-\infty}^{+\infty} dx G_s(x)\ket{x}
	\end{equation*}
	one has
	\begin{equation}
		\label{eq:povmNx}
		N^s_x = \bra{x} U_{\rm int} \ket{\Psi} = G_s(x-1) \mathcal{E} + G_s(x) \tilde{\mathcal{E}},
	\end{equation} 
	\begin{equation}
		M^s_x:= N^{s \dag}_x N^s_x = G_s^2 (x-1) \mathcal{E} + G_s^2(x) \tilde{\mathcal{E}}.
	\end{equation}
	We can then recognise that $p^s(x):=  G_s^2(x)$ has median $x=0$, from which we obtain condition \ref{cond2a}. To verify the other conditions, the following integrals will be useful:
	\begin{equation}
		\int_{-\infty}^{+\infty} G_s(x-a) G_s (x-b) dx = \exp[-(a-b)^2/(4s^2)].
	\end{equation}
	Substituting Eq.~\eqref{eq:povmNx} in the definition of $S$ and using the integrals above one then has
	\begin{equation}
		S^s = \mathcal{E} \Pi \mathcal{E} + \tilde{\mathcal{E}} \Pi \tilde{\mathcal{E}} + e^{-\frac{1}{4s^2}}(\mathcal{E} \Pi \tilde{\mathcal{E}} + \tilde{\mathcal{E}} \Pi \mathcal{E}) .
	\end{equation}
	From $\mathcal{E} + \tilde{\mathcal{E}} = \mathbb{1}$,
	\begin{equation*}
		\mathcal{E} \Pi \mathcal{E} + \tilde{\mathcal{E}} \Pi \tilde{\mathcal{E}} = \frac{1}{2}\Pi +  \frac{1}{2}(	\mathcal{E} \Pi \mathcal{E} + \tilde{\mathcal{E}} \Pi \tilde{\mathcal{E}})  - \frac{1}{2}(	\mathcal{E} \Pi \tilde{\mathcal{E}} + \tilde{\mathcal{E}} \Pi \mathcal{E}) = \frac{1}{2}\Pi + \frac{1}{2} (\mathcal{E} - \tilde{\mathcal{E}})\Pi(\mathcal{E} - \tilde{\mathcal{E}}) 
	\end{equation*}
	\begin{equation*}
		\mathcal{E} \Pi \tilde{\mathcal{E}} + \tilde{\mathcal{E}} \Pi \mathcal{E} = \frac{1}{2}\Pi +  \frac{1}{2}(	\mathcal{E} \Pi \tilde{\mathcal{E}} + \tilde{\mathcal{E}} \Pi \mathcal{E})  - \frac{1}{2}(	\mathcal{E} \Pi \mathcal{E} + \tilde{\mathcal{E}} \Pi \tilde{\mathcal{E}}) = \frac{1}{2}\Pi - \frac{1}{2} (\mathcal{E} - \tilde{\mathcal{E}})\Pi(\mathcal{E} - \tilde{\mathcal{E}}) 
	\end{equation*}
	Substituting these relations in the expression for $S^s$, and defining $E_d =  (\mathcal{E} - \tilde{\mathcal{E}})\Pi(\mathcal{E} - \tilde{\mathcal{E}})$,
	\begin{equation}
		S^s = \frac{1+ e^{-1/(4s^2)}}{2}\Pi + \frac{1- e^{-1/(4s^2)}}{2} E_d
	\end{equation}
	Since $E_d ^2 = E_d$, $\{E_d, \mathbb{1} - E_d\}$ defines a POVM. Setting $p^s_d = \frac{1- e^{-1/(4s^2)}}{2}$, we obtain condition \ref{cond2b}.
	
	Finally, for the last condition the following integrals will be useful (erfc denotes the complementary Gauss error function, i.e. $ {\rm erfc}(s) = 1- {\rm erf}(s)$, where erf is the Gauss error function):
	\begin{equation*}
		\int_{-\infty}^{0}G_s^2(x-1)dx = \frac{1}{2} {\rm erfc}\left(\frac{1}{s}\right)   , \quad \int_{-\infty}^{0}G_s^2(x)dx = \frac{1}{2}, \quad \int_{-\infty}^{0}G_s(x-1) G_s(x)dx = \frac{e^{-1/(4s^2)}}{2} {\rm erfc}\left(\frac{1}{2s}\right)
	\end{equation*}
	Let us then compute $p^s_-$, substituting Eq.~\eqref{eq:povmNx} and using the above integrals:
	\begin{equation}
		p^s_- = \frac{1}{p_\Pi} \left\{ \frac{1}{2} {\rm erfc}\left(\frac{1}{s}\right)    \tr{}{\mathcal{E} \Pi \mathcal{E} \rho}  + \frac{e^{-1/(4s^2)}}{2}  {\rm erfc}\left(\frac{1}{2s}\right) \tr{}{(\tilde{\mathcal{E}} \Pi \mathcal{E} + \mathcal{E} \Pi \tilde{\mathcal{E}})\rho} + \frac{1}{2}\tr{}{\tilde{\mathcal{E}} \Pi \tilde{\mathcal{E}} \rho}  \right\}.
	\end{equation}
	Expanding around $s\rightarrow +\infty$,
	\begin{equation*}
		\frac{1}{2} {\rm erfc}\left(\frac{1}{s}\right)  = \frac{1}{2} - \frac{1}{\sqrt{\pi}s} + o\left(\frac{1}{s}\right), \quad  \frac{1}{2}e^{-1/(4s^2)} {\rm erfc}\left(\frac{1}{2s}\right)  = \frac{1}{2} - \frac{1}{2\sqrt{\pi}s} + o\left(\frac{1}{s}\right).
	\end{equation*}
	From this it follows
	\begin{eqnarray*}
		p^s_- &=& \frac{1}{2} - \frac{1}{ 2p_\Pi \sqrt{\pi}s} \tr{}{(\mathcal{E} \Pi \mathcal{E} + \tilde{\mathcal{E}} \Pi \mathcal{E})\rho}  - \frac{1}{ 2 p_\Pi \sqrt{\pi}s} \tr{}{(\mathcal{E} \Pi \mathcal{E} + \mathcal{E} \Pi \tilde{\mathcal{E}})\rho} + o\left(\frac{1}{s}\right)\\
		& = & \frac{1}{2} - \frac{1}{ 2p_\Pi \sqrt{\pi}s} \tr{}{\Pi \mathcal{E} \rho + \mathcal{E} \Pi \rho} = \frac{1}{2} - \frac{1}{ p_\Pi \sqrt{\pi}s} \Re \tr{}{\rho \mathcal{E} \Pi} + o\left(\frac{1}{s}\right)	\end{eqnarray*}
	Since $p^s_d = 1/(8s^2) + o\left(\frac{1}{s^2}\right) = o\left(\frac{1}{s}\right)  $,
	\begin{equation}
		p^s_- - \frac{1}{2} - \frac{p^s_d}{p_\Pi} = -\frac{1}{ p_\Pi \sqrt{\pi}s} \Re \tr{}{\rho \mathcal{E} \Pi} + o\left(\frac{1}{s}\right)
	\end{equation}
	Hence, given that $\Re \tr{}{\rho \mathcal{E} \Pi} < 0$, for $s> 0$ large enough condition \ref{cond2c} is satisfied.
	
\end{proof}

The second lemma is the analogue of the main theorem of Pusey in Ref.~\cite{pusey2014anomalous} under the generalised assumptions derived through the previous lemma:

\begin{lem}
	Let $\rho$ be a quantum state and $\mathcal{E}$, $\Pi$ be projectors. Under conditions \ref{cond1} and \ref{cond2} of Lemma~\ref{lemma}, there exists no measurement non-contextual ontological model for preparation $\rho$, measurement $\{M^s_x\}_{x \in \R}$ and postselection of $\Pi$ satisfying outcome determinism. 
\end{lem}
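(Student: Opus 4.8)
The plan is to run Pusey's argument \cite{pusey2014anomalous} in the generalised setting guaranteed by Lemma~\ref{lemma}. Suppose, for contradiction, that a measurement non-contextual ontological model with outcome determinism exists for the preparation $\rho$, the POVM $\{M^s_x\}_{x\in\R}$ and the post-selection $\{\Pi,\tilde\Pi\}$, where $\tilde\Pi=\mathbb 1-\Pi$. Denote by $[\mathcal{E}](\lambda)$ and $[\Pi](\lambda)$ the response functions of the sharp measurements $\{\mathcal{E},\tilde{\mathcal{E}}\}$ and $\{\Pi,\tilde\Pi\}$: by measurement non-contextuality they depend on the effect alone, by outcome determinism they are $\{0,1\}$-valued, and $[\tilde{\mathcal{E}}]=1-[\mathcal{E}]$, $[\tilde\Pi]=1-[\Pi]$.

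First I would isolate the structural input that does the heavy lifting: in an ontological model compatible with convex mixtures and coarse-grainings, response functions act affinely on effects. Precisely, if an effect $E$ decomposes as $E=\sum_m q_m E_m$ with $q_m\ge 0$ and each $E_m$ an effect, then by realising the measurement that flips a coin with bias $\{q_m\}$ and measures $\{E_m,\mathbb 1-E_m\}$ in branch $m$, the compatibility conditions \eqref{eq:ontologicalconvexity}--\eqref{eq:ontologicalcoarsegraining} together with measurement non-contextuality force the response of $E$ to equal $\sum_m q_m\,\xi_{E_m}$; this is the standard mixing lemma of Refs.~\cite{spekkens2005contextuality, spekkens2014status}, and it extends to the integrals below by approximating them with finite partitions of $\R$.

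Next I would introduce the composite measurement implementing ``apply the instrument $\{N^s_x\}$, read off $x$, then measure $\{\Pi,\tilde\Pi\}$'', coarse-grained to the four outcomes $\{x<0,\,x\ge 0\}\times\{\Pi,\tilde\Pi\}$, with effects
\begin{equation*}
A_\pm=\int_{I_\pm} N^{s\dagger}_x\Pi N^s_x\,dx,\qquad B_\pm=\int_{I_\pm} N^{s\dagger}_x\tilde\Pi N^s_x\,dx,\qquad I_-=(-\infty,0),\ \ I_+=[0,\infty),
\end{equation*}
and nonnegative response functions $\xi_{A_\pm},\xi_{B_\pm}$. Two coarse-grainings then supply the bounds I need. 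Since $A_-+B_-=\int_{-\infty}^{0}M^s_x\,dx=c^s\mathcal{E}+\tfrac12\tilde{\mathcal{E}}$ with $c^s:=\int_{-\infty}^{0}p^s(x-1)\,dx\le\tfrac12$ (the mass of $p^s$ below $-1$ is at most its mass below the median $0$, by condition~\ref{cond2a}), the mixing lemma gives
\begin{equation*}
\xi_{A_-}(\lambda)\le \xi_{A_-}(\lambda)+\xi_{B_-}(\lambda)=c^s[\mathcal{E}](\lambda)+\tfrac12\bigl(1-[\mathcal{E}](\lambda)\bigr)\le\tfrac12 .
\end{equation*}
Since $A_-+A_+=S^s=(1-p^s_d)\Pi+p^s_d E_d$ by condition~\ref{cond2b}, the mixing lemma gives $\xi_{A_-}(\lambda)\le(1-p^s_d)[\Pi](\lambda)+p^s_d\,\xi_{E_d}(\lambda)\le(1-p^s_d)[\Pi](\lambda)+p^s_d$ (using $\xi_{E_d}\le 1$). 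A case split on $[\Pi](\lambda)\in\{0,1\}$ — which is exactly where outcome determinism enters — combines the two inequalities into $\xi_{A_-}(\lambda)\le\tfrac12[\Pi](\lambda)+p^s_d\bigl(1-[\Pi](\lambda)\bigr)$.

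Finally I would integrate against $p(\lambda|\rho)$ and use \eqref{eq:ontologicalreproduce} with $\int_\Lambda d\lambda\,p(\lambda|\rho)[\Pi](\lambda)=\tr{}{\Pi\rho}=p_\Pi$:
\begin{equation*}
p_\Pi\,p^s_- \;=\; \int_\Lambda d\lambda\,p(\lambda|\rho)\,\xi_{A_-}(\lambda)\;\le\;\tfrac12 p_\Pi+p^s_d(1-p_\Pi)\;\le\;\tfrac12 p_\Pi+p^s_d ,
\end{equation*}
hence $p^s_-\le\tfrac12+p^s_d/p_\Pi$, in direct contradiction with condition~\ref{cond2c} for $s$ large enough. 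Together with Lemma~\ref{lemma}, which supplies conditions~\ref{cond1} and~\ref{cond2} precisely when $\Re\tr{}{\rho\mathcal{E}\Pi}<0$, this proves Theorem~2. The step I expect to demand the most care is the mixing lemma and its measure-theoretic bookkeeping for the continuous-outcome instrument: one must verify that the coarse-grained composite measurement is a legitimate operational procedure (so the model is obliged to represent it), that its coarse-grainings are genuinely the operators $c^s\mathcal{E}+\tfrac12\tilde{\mathcal{E}}$ and $S^s$ claimed above, and that affineness of response functions survives the passage from finite convex mixtures to the integrals defining $A_\pm,B_\pm$. Everything else is the short chain of inequalities just displayed, mirroring the finite-dimensional argument of Ref.~\cite{pusey2014anomalous}.
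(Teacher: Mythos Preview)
Your proof is correct and follows essentially the same route as the paper's: both derive the two bounds $\xi_{A_-}\le\tfrac12$ (from condition~\ref{cond2a} via the coarse-graining $A_-+B_-=\int_{-\infty}^0 M^s_x\,dx$) and $\xi_{A_-}\le(1-p^s_d)[\Pi]+p^s_d$ (from condition~\ref{cond2b} via $A_-+A_+=S^s$), then combine them using outcome determinism for $\Pi$ to contradict condition~\ref{cond2c}. The only cosmetic difference is that you pre-coarse-grain to the four discrete outcomes $A_\pm,B_\pm$ and package the case split into the single inequality $\xi_{A_-}\le\tfrac12[\Pi]+p^s_d(1-[\Pi])$, whereas the paper works with the continuous outcome $x$ and splits the $\lambda$-integration over $\Lambda_0,\Lambda_1$; the logic is identical.
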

\begin{proof}
	The argument of Ref.~\cite{pusey2014anomalous} carries through in our generalised situation, as we now show. To ease comparisons, we use a notation similar to that introduced in Ref.~\cite{pusey2014anomalous}.
	
	Consider the measurement $\{M^s_x = N^{s \dag}_x N^s_x\}$ followed by $\{\Pi,1-\Pi\}$ as a single measurement $\{S^s_x\} \cup \{F^s_x\}$, where $S^s_x = N^{s \dag}_x \Pi N^s_x$, $F^s_x = N^{s \dag }_x (1- \Pi) N^s_x$ correspond to successful or failed post-selection, respectively. The measurement non-contextual ontological model is required to satisfy
	\begin{equation}
		\label{eq:basicrequirement}
		\tr{}{\Pi N^s_x \rho N^{s \dag}_x} = \tr{}{S^s_x \rho}= \int_{\Lambda} p(x|S^s_x,\lambda) p(\lambda|\rho)d\lambda.
	\end{equation}
	The idea of the proof is 
	\begin{enumerate}
		\item  Find bounds on $p(x|S^s_x,\lambda)$ that every measurement non-contextual and outcome determinism model must satisfy if conditions \ref{cond1}, \ref{cond2a} and \ref{cond2b} hold.
		\item Show that these bounds are incompatible with condition \ref{cond2c}.
		\item Since conditions \ref{cond1} and \ref{cond2a}, \ref{cond2b}, \ref{cond2c} hold by assumption, we conclude that at least one among measurement non-contextuality and outcome determinism must go.
	\end{enumerate}
	We will use the assumption of measurement non-contextuality throughout the proof. Note that $M^s_x$ is a coarse-graining of $S^s_x$ and $F^s_x$, in the sense that $M^s_x = S^s_x + F^s_x$. From the coarse-graining assumption in the definition of ontological model (Eq.~\eqref{eq:ontologicalcoarsegraining}), we hence must have 
	\begin{equation}
		\label{eq:slessthane}
		p(x|M^s_x,\lambda) = p(x|S^s_x,\lambda) + p(x|F^s_x,\lambda) \quad \Rightarrow \quad  p(x|S^s_x,\lambda) \leq p(x|M^s_x,\lambda)
	\end{equation} 
	We now use condition~\ref{cond2a} and the convexity assumption of ontological models (Eq.~\eqref{eq:ontologicalconvexity}). Denoting by $1$ and $0$, respectively, the outcomes associated to projectors $\mathcal{E}$ and $\tilde{\mathcal{E}}$, we must have
	\begin{equation}
		\label{eq:decomposee}
		p(x|M^s_x,\lambda) = p^s(x-1) p(1|\mathcal{E},\lambda) + p^s(x)p(0|\tilde{\mathcal{E}},\lambda)
	\end{equation}
	The median of $p^s(x)$ is zero due to condition~\ref{cond2a}, so 
	\begin{equation}
		\label{eq:medianbound}
		\int_{-\infty}^{0} p^s(x-1)dx \leq \int_{-\infty}^{0} p^s(x)dx = 1/2
	\end{equation}. Hence, combining Eqs.~\eqref{eq:slessthane}-\eqref{eq:medianbound} we obtain the first of the two bounds we need
	\begin{equation}
		\label{eq:boundSx1}
		\int_{-\infty}^0 p(x|S^s_x,\lambda) dx\leq \int_{-\infty}^0 p(x|M^s_x,\lambda) dx \leq \frac{1}{2}\left( p(1|\mathcal{E},\lambda) + p(0|\tilde{\mathcal{E}},\lambda) \right) = \frac{1}{2}.
	\end{equation}
	Now recall that $S^s = \int_{-\infty}^{+\infty} S^s_x$ and consider the POVM $\{S^s,\mathbb{1} - S^s\}$ with outcomes $1$, $0$, respectively. From condition~\ref{cond2b} and using again the coarse-graining and convexity assumptions of ontological models of Eqs.~\eqref{eq:ontologicalconvexity}-\eqref{eq:ontologicalcoarsegraining}
	\begin{equation}
		\label{eq:S}
		S^s = (1-p^s_d)\Pi + p^s_d E_d \quad \Rightarrow \quad \int_{-\infty}^{+\infty} p(x|S^s_x, \lambda) dx = (1-p^s_d) p(1|\Pi,\lambda) + p^s_d p(1|E_d,\lambda).
	\end{equation}
	Note that we denoted by $1$, $0$ the outcomes of $\{E_d, \mathbb{1} - E_d\}$, respectively. 
	
	Now, due to outcome determinism and the fact that $\{\Pi, \mathbb{1} - \Pi\}$ is a projective measurement, we can partition $\Lambda$ in two disjoint sets $\Lambda_0$ and $\Lambda_1$, where
	\begin{equation}
		\label{eq:probabilityofQ}
		p(1|\Pi,\lambda) = \left\{ 
		\begin{array}{ll} 
			0 \quad & \lambda \in \Lambda_0 \\
			1 \quad & \lambda \in \Lambda_1
		\end{array} \right.
	\end{equation}
	Then, for all $\lambda \in \Lambda_0$, Eq.~\eqref{eq:S} implies
	\begin{equation}
		\label{eq:boundSx2}
		\int_{-\infty}^{0}p(x|S^s_x,\lambda) dx\leq \int_{-\infty}^{+\infty}p(x|S^s_x,\lambda)dx =  p^s_d p(1|E_d,\lambda) \leq p^s_d.
	\end{equation}
	This is the second bound we were looking for. Given Eq.~\eqref{eq:boundSx1} and Eq.~\eqref{eq:boundSx2}, we are ready to derive a contradiction with condition~\ref{cond2c}. Due to Eq.~\eqref{eq:basicrequirement}, we have
	\begin{equation}
		p^s_-:= \frac{1}{p_\Pi} \int_{-\infty}^0 \tr{}{N^{s \dag}_x \Pi N^s_x \rho} dx = \frac{1}{p_\Pi} \int_{-\infty}^0 \int_{\Lambda_0} p(x|S^s_x,\lambda) p(\lambda|\rho) dx d\lambda + \frac{1}{p_\Pi} \int_{-\infty}^0 \int_{\Lambda_1} p(x|S^s_x,\lambda) p(\lambda|\rho) dx d\lambda
	\end{equation}
	In the first term, let us use Eq.~\eqref{eq:boundSx2}, while in the second we use the bound of Eq.~\eqref{eq:boundSx1}. This gives, 
	\begin{equation}
		p^s_- \leq \frac{p^s_d}{p_\Pi} \int_{\Lambda_0}  p(\lambda|\rho) d\lambda + \frac{1}{2p_\Pi} \int_{\Lambda_1}  p(\lambda|\rho) d\lambda \leq \frac{p^s_d}{p_\Pi} + \frac{1}{2p_\Pi} \int_{\Lambda_1}  p(\lambda|\rho) d\lambda,
	\end{equation}
	where for the second inequality we used $\int_{\Lambda_0}  p(\lambda|\rho) d\lambda \leq 1$, which follows from the normalisation condition.
	Eq.~\eqref{eq:probabilityofQ} and the definition of ontological models lead to the following chain of equalities:
	\begin{equation}
		\frac{1}{2p_\Pi} \int_{\Lambda_1}  p(\lambda|\rho) d\lambda = \frac{1}{2p_\Pi} \int_{\Lambda_1}  p(1|\Pi,\lambda) p(\lambda|\rho) d\lambda = \frac{1}{2p_\Pi} \int_{\Lambda}  p(1|\Pi,\lambda) p(\lambda|\rho) d\lambda = \frac{1}{2p_\Pi} \tr{}{\Pi\rho} = \frac{1}{2}
	\end{equation}
	Hence we conclude
	\begin{equation}
		p^s_- \leq \frac{p^s_d}{p_\Pi} + \frac{1}{2},
	\end{equation}
	Since this is in contradiction with condition~\ref{cond2c}, we conclude.
\end{proof}

\end{document}